\theoremstyle{definition}
\theoremstyle{plain}\newtheorem{obs}{Observation}[section]
\theoremstyle{plain}\newtheorem{thm}{Theorem}[section]
\theoremstyle{plain}
\theoremstyle{plain}\newtheorem{cor}{Corollary}[section]
\theoremstyle{plain}\newtheorem{remark}{Remark}[section]
\newcommand\fitch{\mathbf{F}}
\newcommand\rt{\rho} 
\newcommand\stateset{\mathbf{C}}
\newcommand\ch{\Delta}
\newcommand\dMP{\ensuremath{d_{\textsc{MP}}}}
\newcommand{\dTBR}{\ensuremath{d_{\textsc{TBR}}}}
\newcommand{\tbr}{\textsc{tbr}}
\newcommand{\maf}{\textsc{maf}}
\newcommand{\spr}{\textsc{spr}}
\newcommand{\steven}{\textcolor{black}}
\newcommand{\kelk}{\textcolor{black}}
\author{Steven Kelk\thanks{Department of Knowledge Engineering, Maastricht University, P.O. Box 616, 6200 MD Maastricht, Netherlands. \kelk{(Kelk and Wu are corresponding authors).}},  Mareike Fischer\thanks{Institut f\"ur Mathematik und Informatik,
Walther-Rathenau-Stra{\ss}e 47,
17487 Greifswald, Germany.}, Vincent Moulton, Taoyang Wu\thanks{Moulton and Wu are both affiliated to: School of Computing Sciences, University of East Anglia, Norwich, NR4 7TJ,
United Kingdom.}}
\title{Reduction rules for the maximum parsimony 
distance on phylogenetic trees}
\date{\today}
\begin{document}
\maketitle
\begin{abstract}
\noindent  
In phylogenetics, distances are often used to measure the incongruence between a pair of phylogenetic trees that are reconstructed by different methods or using different regions of genome. Motivated by the maximum parsimony principle in tree inference, we recently introduced the maximum parsimony (MP) distance,  which enjoys various attractive properties due to its  connection with several other well-known tree distances, such as $\tbr$ and $\spr$. Here we show that computing the MP distance between two trees, a NP-hard problem in general, is fixed parameter
tractable in terms of the $\tbr$ distance between the tree pair. Our approach is based on two reduction rules--the chain reduction and the subtree reduction--that are widely used in computing $\tbr$ and $\spr$ distances. More precisely,  we show that reducing chains to length 4 (but not shorter) preserves the MP distance.  In addition, we describe a 
generalization of the subtree reduction which allows  the pendant subtrees to be rooted in different 
places, and show that this still preserves the MP distance. \steven{On a slightly different note we also show that Monadic Second Order Logic (MSOL), posited over an auxiliary graph structure known as the display graph (obtained by merging the two trees at their leaves), can be used to obtain an alternative proof that computation of MP distance is fixed parameter tractable in terms of $\tbr$-distance.} We conclude with an extended discussion in which we focus on similarities and differences between MP distance and TBR
distance and present a number of open problems. \steven{One particularly intriguing question, emerging from the MSOL formulation, is whether two trees with bounded MP distance induce display graphs of bounded treewidth.}\footnote{Keywords: Phylogenetics, parsimony, fixed parameter tractability, chain, incongruence, \steven{treewidth.}}
\end{abstract}




\section{Introduction}

Finding an optimal tree explaining the relationships 
of a group of species based on datasets at the genomic level is one of the important challenges in modern phylogenetics. First, there are various 
methods to estimate the ``best'' tree subject to 
certain criteria, such as e.g. Maximum Parsimony or 
Maximum Likelihood. However, different methods often 
lead to different trees for the same dataset, or the 
same method leads to different trees when 
different parameter values are used. Second, the  trees reconstructed from different regions of the genome might also be different, even when using the same criteria. In any case, 
when two (or more) trees for one particular set of 
species are given, the problem is to quantify how different 
the trees really are -- are they entirely different or 
do they agree concerning the placement of most species? 

In order to answer this problem, various distances have 
been proposed (see e.g.~\cite{steel1993distributions}). 
A relatively new one is the so-called 
Maximum Parsimony distance, or MP distance for short, 
which we denote $\dMP$ \cite{dMP-fischer,dMP-kelk,dMP-moulton}. 
This distance (which is a metric) is appealing in part due to the fact that it is closely 
related to the parsimony criterion for constructing 
phylogenetic trees, as well as to the Subtree Prune and Regraft ({\spr}) and Tree Bisection and Reconnection ({\tbr}) distances.  \kelk{Indeed, it is shown in \cite{dMP-moulton} that the unit neighbourhood of the MP distance is larger than those of the {\spr} and {\tbr} distances, implying that a hill-climbing heuristic search based on the MP distance will be less likely to be trapped in a local optimum than those based on the {\spr} or {\tbr} distances.} Recently, it has been shown that 
computing the MP distance 
is NP-hard \cite{dMP-fischer,dMP-kelk}
even for binary phylogenetic trees. 
For practical purposes it is therefore desirable to determine whether computation
of $\dMP$ is fixed parameter tractable (FPT). Informally, this asks whether $\dMP$ can be computed efficiently when $\dMP$ (or some other parameter of the input) is small, irrespective of the
number of species in the input trees. We refer to standard texts such as \cite{downey2013fundamentals} for more background on FPT. Such algorithms are used extensively in phylogenetics, see e.g. \cite{Whidden2014} for a recent
example.


An obvious approach to address this
question is to try to \emph{kernelize} the problem.
Roughly speaking, when given two trees, we seek to simplify 
them as much as possible without changing  
$\dMP$ so that we can calculate the 
distance for the simpler trees rather 
than the original ones. Standard procedures that have been used to 
kernelize other phylogenetic tree distances are the 
so-called subtree and chain reductions (see, for example, 
\cite{allen01,bordewich07, hickey08}). 
In this paper we show that the chain reduction preserves $\dMP$ and that 
chains can be reduced to length 4 (but not less). 
Moreover, we show that a certain generalized subtree reduction, namely one 
where the subtrees are allowed to have different 
root positions, also has this property, which 
extends a result in \cite{dMP-moulton}. Both reductions can be applied in polynomial
time.

These new results allow us to leverage the existing literature on {\tbr} distance.
Specifically, in
\cite{allen01} Allen and Steel showed
that {\tbr} distance, denoted \dTBR, is NP-hard to 
compute, by exploiting the essential equivalence of the problem with
the Maximum Agreement Forest (\maf) problem: \steven{they differ by exactly 1}. In the same article
they showed \kelk{(again utilizing the equivalence with {\maf})} that computation of \steven{$\dTBR$}
is FPT in parameter \steven{$\dTBR$}.
More specifically, it was shown that combining the subtree
reduction with the chain reduction (where chains are 
reduced to length 3, rather
than length 4 as we do here) is sufficient to 
obtain a reduced pair of trees where the number of species
is at most a \emph{linear} 
function of \steven{$\dTBR$}.
Careful
reading of the analysis in \cite{allen01} shows 
that a linear (albeit slightly larger) kernel is 
still obtained for
\steven{$\dTBR$}
if chains are 
reduced to length 4 rather than 3. More recently, 
in \cite{kelkFibonacci} an exponential-time 
algorithm was described and implemented which 
computes $\dMP$ in time $\Theta(\phi^{n} \cdot \text{poly}(n))$ 
where $n$ is the number of species in the trees and $\phi \approx 1.618...$ is the golden ratio. 
Combining the results of \cite{allen01,kelkFibonacci} with the main
results of the current paper (i.e. Theorems \ref{thm:reduc} and \ref{thm:gensubtree}) immediately yields the following theorem:\\
\\
\begin{thm}
\label{thm:fpttbr}
Let $T_1$ and $T_2$ be two unrooted binary 
trees on the same set of species $X$. Computation
of $\dMP(T_1,T_2)$ is fixed parameter tractable 
in parameter $\dTBR = \dTBR(T_1, T_2)$.  More 
specifically, $\dMP(T_1,T_2)$ can be computed 
in time $O( \phi^{c \cdot \dTBR} \cdot \emph{poly}(|X|) )$ where $\phi \approx 1.618...$
is the golden ratio and $c \leq 112/3$.
\end{thm}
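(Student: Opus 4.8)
The plan is to reduce the input to a small equivalent instance and then run the known exponential-time exact algorithm on that instance. First I would apply, exhaustively and in polynomial time, two reduction rules to $(T_1,T_2)$: the subtree reduction, which replaces a maximal common pendant subtree on at least two leaves by a single new leaf, and the chain reduction of Theorem~\ref{thm:reduc}, which shortens every common chain to length $4$. Repeated application terminates after polynomially many steps and yields a pair $(T_1',T_2')$ on a leaf set $X'$. By the classical theory of agreement forests \cite{allen01} these rules leave $\dTBR$ unchanged, so $\dTBR(T_1',T_2') = \dTBR(T_1,T_2) =: d$; and by Theorems~\ref{thm:reduc} and \ref{thm:gensubtree} (the subtree reduction being the special case of the generalized subtree reduction in which the two copies of the pendant subtree are rooted at the same place) they also leave $\dMP$ unchanged, so $\dMP(T_1',T_2') = \dMP(T_1,T_2)$.

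Next I would bound $|X'|$ using the kernelization analysis of Allen and Steel. They show that a pair that is fully reduced under the subtree reduction and under the chain reduction \emph{to length $3$} has $O(\dTBR)$ leaves. The point is that their counting argument survives essentially intact when chains are only shortened to length $4$: fixing a maximum agreement forest of $(T_1',T_2')$, which has $d+1$ components, one charges each leaf of $X'$ either to a cut edge of the forest (there are $O(d)$ of these), or to a maximal pendant subtree hanging off a component (a single leaf, by the subtree reduction, and $O(d)$ of them), or to a common chain (each now of length at most $4$, and $O(d)$ of them). Carrying the constants through — the only term that changes being the one counting leaves on chains — turns the linear bound known for length-$3$ chains into $|X'| \le (112/3)\,\dTBR$.

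Finally, I would run the algorithm of \cite{kelkFibonacci} on $(T_1',T_2')$: it computes $\dMP(T_1',T_2') = \dMP(T_1,T_2)$ in time $\Theta\bigl(\phi^{|X'|}\cdot\mathrm{poly}(|X'|)\bigr)$. Since $|X'| \le (112/3)\,d$ and $d \le |X|$, and since the kernelization itself costs only $\mathrm{poly}(|X|)$, the total running time is $O\bigl(\phi^{c\cdot\dTBR}\cdot\mathrm{poly}(|X|)\bigr)$ with $c \le 112/3$, which is the claimed bound.

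The algorithmic ingredients (the polynomial-time reductions and the golden-ratio algorithm) are available in the literature once Theorems~\ref{thm:reduc} and \ref{thm:gensubtree} are in hand, so the only genuinely delicate step is the kernel bound: one must re-examine Allen and Steel's proof to confirm that it still yields a \emph{linear} function of $\dTBR$ when chains have length $4$ rather than $3$, and to extract the explicit constant $112/3$. A secondary point to keep straight is which reductions are being used for which purpose — the classical subtree and chain reductions are what the $\dTBR$-kernel argument requires, whereas $\dMP$-preservation rests on the stronger results of this paper, and in particular on the fact that chains are shortened only to length $4$ and not below.
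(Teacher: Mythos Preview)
Your proposal is correct and follows essentially the same route as the paper: apply the (traditional) subtree reduction and the length-$4$ chain reduction to obtain a pair $(T_1',T_2')$ with $\dMP$ and $\dTBR$ both preserved, invoke the Allen--Steel kernel analysis adjusted for length-$4$ chains to bound $|X'|$ linearly in $\dTBR$, and then run the $\phi^{n}$-time algorithm of \cite{kelkFibonacci}. The paper obtains the constant $112/3$ in exactly the way you anticipate, namely by taking the $28\cdot\dTBR$ bound from \cite{allen01} and scaling by $4/3$ to account for the longer chains; it also explicitly notes, as you do, that only the traditional subtree reduction is needed here, not the full generalized version.
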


The constant 112/3 is obtained by multiplying the bound on the size of the kernel given in \cite{allen01} ($28 \cdot \dTBR$)  by a factor 4/3, which adjusts for the fact that here chains are reduced to length 4 rather than 3.
Note also that Theorem~\ref{thm:fpttbr} does not require us to 
apply the generalized subtree reduction: the traditional subtree 
reduction together with the chain reduction is sufficient.

We now summarise the rest of the paper. In the next section we collect some necessary definitions and notations, including a brief description of Fitch's algorithm which our proofs extensively use. Then in the following three sections we establish the two reductions for the MP distance, that is, the chain reduction and the subtree reduction, and remark that
a theoretical variant of Theorem \ref{thm:fpttbr} could also be attained by leveraging
Courcelle's Theorem~\cite{courcelle1990,Arnborg91}, extending in a non-trivial way a technique introduced in \cite{kelk2015}.
\steven{Specifically, computation of $\dMP(T_1,T_2)$ can be formulated as a sentence of
Monadic Second Order Logic (MSOL) posited over an auxiliary graph structure known as the display graph. The display graph is obtained by (informally) merging the two trees at their leaves. Crucially, the length
of the sentence, and the treewidth of the display graph, are shown to be both bounded as a function of $\dTBR$.}

 We end with an extended
discussion in which we focus on similarities and differences between MP distance and TBR
distance. From a theoretical perspective the two distances sometimes behave rather differently but in practice {$\dMP$} and {$\dTBR$} are often very close indeed.
The major open problem that remains is whether
computation of $\dMP$ is FPT when parameterized by itself. \steven{One possible route to this result is via a
strengthened MSOL formulation, but this requires a number of challenging questions to be answered. In particular,
can the treewidth of the display graph be bounded as a function of {\dMP} (rather than {\dTBR})? This in turn
is likely to require new structural results on the interaction between (large grid) minors in the display graph and
phylogenetic incongruency parameters.}

\section{Preliminaries}
\label{sec:pre}

\subsection{Basic definitions}

An \emph{unrooted binary phylogenetic tree} on a set of species (or, more abstractly, \emph{taxa}) $X$ is a connected, undirected
tree in which all internal nodes have degree 3 and the leaves are bijectively labelled
by $X$. For brevity we henceforth refer to these simply as \emph{trees}, and we often
use the elements of $X$ to denote the leaves they label. In some cases, we have to consider \emph{rooted binary phylogenetic trees} instead of unrooted ones. These trees have an additional internal node of degree 2. When referring to such trees, we will talk about \emph{rooted trees} for short.

 For two trees $T_1$ and $T_2$ on the same set of taxa $X$, we write $T_1 = T_2$ if there
is an isomorphism between the two trees that preserves the labels $X$. The expression $T|_{X'}$,
where $X' \subseteq X$, has the usual definition, namely: the tree obtained by taking the unique minimal spanning tree on $X'$ and then repeatedly suppressing any nodes of degree 2. 

A \emph{character} on $X$ is a surjective function $f: X \rightarrow \stateset$ where $\stateset$ is
a set of \emph{states}. Given a phylogenetic tree $T = (V,E)$ on $X$, and a character
$f$ on $X$, an \emph{extension} of $f$ to $T$ is a mapping $\overline{f}: V \rightarrow \stateset$
which extends $f$ i.e. for every $x \in X$, $f(x) = \overline{f}(x)$. The number of
\emph{mutations} induced by $\overline{f}$, denoted by $\ch(\overline{f})$, is defined to be the number of edges
$\{u,v\} \in E$ such that $\overline{f}(u) \neq \overline{f}(v)$. The \emph{parsimony score} of $f$ on $T$ (sometimes called the \emph{length}) is defined to be the
minimum, ranging over all extensions $\overline{f}$ of $f$ to $T$, of the number of mutations
induced by $\overline{f}$. This is denoted $l_{f}(T)$.  Following~\cite{wu2009refining}, an extension $\overline{f}$ that
achieves this minimum is called a \emph{minimum} extension (also known as an optimal extension, but here we reserve the word optimal for other use). This value can be computed in
polynomial time using dynamic programming. Fitch's algorithm is the most well-known
example of this. (We will
use Fitch's algorithm extensively in this article and give a brief description of its execution
in the next section). 

Given two trees $T_1$ and $T_2$ on $X$, the \emph{maximum parsimony distance} of
$T_1$ and $T_2$, denoted $\dMP = \dMP(T_1, T_2)$, is defined as
\[
\dMP(T_1, T_2) = \max_{f} | l_f(T_1) - l_f(T_2) |
\]
where $f$ ranges over all characters on $X$. A character $f$ that achieves this maximum
is called an \emph{optimal} character. In \cite{dMP-fischer,dMP-moulton} it is proven that
$\dMP$ is a metric.

Note that in this manuscript, we also compare $\dMP$ to the well-known \emph{Tree Bisection and Reconnection (TBR)} distance, denoted $d_{TBR}$. Recall that a TBR move is performed as follows: Given an unrooted binary phylogenetic tree, delete one edge and suppress all resulting nodes of degree 2. Of the two trees now present, if they consist of at least two nodes, pick an edge and place a degree-2 node on it and choose it; else if either one only consists of one leaf, choose this leaf. Now connect the two chosen nodes with a new edge. This completes the TBR move. Note that $d_{\tbr}(T_1,T_2)$ is defined as the minimum number of TBR moves needed to transform $T_1$ into $T_2$.  In \cite{dMP-fischer,dMP-moulton} it is proven that
$\dMP(T_1,T_2) \leq \dTBR(T_1,T_2)$ for all trees $T_1, T_2$, with both articles listing examples
where the inequality is strict.

A concept which often occurs when discussing tree distances is the so-called \emph{agreement forest} abstraction. Recall that, given two trees $T_1$ and $T_2$ on $X$, an \emph{agreement forest} is a partition of $X$ into non-empty subsets $X_1,\ldots,X_k$, such that $T_1|_{X_i}$ and $T_2|_{X_i}$ are isomorphic for all $i$, and such that the subtrees $T_t|_{X_i}$ and $T_t|_{X_j}$ are node disjoint subtrees of $T_t$ for all $i$ and $j \in \{1,\ldots, k\}$ and for $t=1,2$. An agreement forest with a minimum number of components is called a \emph{Maximum Agreement Forest}, or MAF for short. In \cite{allen01} it was proven that $\dTBR$ is equal
to the number of components in a MAF, minus one.

The last concept we need to recall is \emph{fixed parameter tractability} (FPT).  An algorithm is fixed parameter tractable in parameter $k$
if its running time has the form $g(k) \cdot \text{poly}(n)$ where $n$ is the size of the input (here we take $n=|X|$) and $g$ is some (usually exponential) computable function that depends
only on $k$. For distances on trees it is quite usual to take the distance itself as the parameter, but other parameters can be chosen, and this is the approach we take in this article (i.e. we
parameterize computation of $\dMP$ in terms of $\dTBR$). For more formal background on FPT we refer the reader to \cite{downey2013fundamentals}.

\steven{We defer a number of definitions (concerning treewidth and display graphs) until later in the article.}

\subsection{Fitch's algorithm}
For a given character $f$ on $T$, Fitch's algorithm \cite{Fitch} is a well-known polynomial-time algorithm for computing $l_f(T)$ and inferring a minimum extension of $f$ (see, e.g.~\cite{yang2011analysis}, for a recent application). It has a bottom-up phase followed by a top-down phase (actually, in the original paper, Fitch introduced a second top-down phase, but this is not needed in the present manuscript and is thus ignored here). It works on rooted trees, but the location of the root is not important for computation of $l_f(T)$, so we may root the tree by subdividing an arbitrary edge with a new node $\rt$ and directing all edges away
from this new node. (In particular, this ensures that the child-parent relation is well-defined). For each internal node $u$ of a rooted tree, let $u_l$ and $u_r$ refer to its two children. 

In the first phase, the algorithm constructs the {\em Fitch map} $\fitch: V(T)\to 2^\stateset \setminus \{\emptyset\}$ (induced by character $f$) that  assigns a subset of states to each of node $u$ of $T$  in the following  bottom-up approach:
\begin{enumerate}
\item For each leaf $x$, let $\fitch(x) = \{f(x)\}$. 
\item For each internal node $u$ (for which $\fitch(u_l)$ and $\fitch(u_r)$ have already been computed), let
\begin{equation}
\label{eq:fitch:map}
\fitch(u)=
\fitch(u_l)*\fitch(u_r)
=
\begin{cases}
\fitch(u_l) \cup \fitch(u_r)  & \mbox{if  $\fitch(u_l) \cap \fitch(u_r) = \emptyset$,} \\
\fitch(u_l)\cap \fitch(u_r) & \mbox{otherwise.}
\end{cases}
\end{equation}
\end{enumerate}
An internal node $u$ is called  a $\emph{union}$ node if the first case in Equation~(\ref{eq:fitch:map}) occurs (i.e., $\fitch(u_l) \cap \fitch(u_r) = \emptyset$), and  an \emph{intersection} node otherwise. The
value $l_f(T)$ is equal to the total number of union nodes in $T$. 

For later use, an extension $\overline{f}$ of $f$ on $T$ is called a {\em Fitch-extension}
if (i) $\overline{f}(u)\in \fitch(u)$ holds for all $u\in V(T)$, and (ii) for each non-leaf node $u$ of $V(T)$, we have either $\overline{f}(u)=\overline{f}(u_l)$ or $\overline{f}(u)=\overline{f}(u_r)$ (but not both) if $u$ is a union node, and $\overline{f}(u)=\overline{f}(u_l)=\overline{f}(u_r)$ otherwise (i.e. $u$ is an intersection node).  

In the second phase, for an arbitrary state $s\in \fitch(\rt)$ the algorithm constructs   a Fitch-extension $\overline{f}$ in the following top-down manner. We start with $\overline{f}(\rho)=s$. Suppose that $v$ is a child of $u$ for which $\overline{f}(u)$ is defined, then
\begin{equation}
\label{eq:fitch:extension}
\overline{f}(v)=
\begin{cases}
\overline{f}(u)  & \mbox{if $\overline{f}(u)\in \fitch(v)$,} \\
\mbox{any state in $\fitch(v)$} & \mbox{otherwise.}
\end{cases}
\end{equation}

Since each union node will contribute precisely one mutation for the extension $\overline{f}$ specified in Equation~(\ref{eq:fitch:extension}), each Fitch-extension is always minimum. (However, note that a minimum extension is not necessarily a Fitch-extension \cite{felsenstein2004inferring}.) The following observation, which we use later, is immediate from the second phase of Fitch's algorithm.


\begin{obs}
\label{obs:fitch2general}
Let $T$ be a rooted binary tree on $X$ and let $f$ be a character on $X$. Let $\rt$ be the root of $T$ and consider the Fitch map $\fitch$ induced by $f$. For each state $s\in \fitch(\rt)$, there exists a Fitch-extension $\overline{f}$ of $f$ such that $\overline{f}(\rt)=s$.
\end{obs}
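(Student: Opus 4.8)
The plan is to observe that the statement is essentially a restatement of what the second phase of Fitch's algorithm does, and then to verify the two defining properties of a Fitch-extension directly. So, given $s \in \fitch(\rt)$, I would set $\overline{f}(\rt) = s$ and define $\overline{f}$ on all remaining nodes top-down according to Equation~(\ref{eq:fitch:extension}). This is well-defined because every non-root node has a unique parent and the parent's value is processed before the child's. It then remains to check conditions (i) and (ii) from the definition of a Fitch-extension.

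Condition (i), that $\overline{f}(u) \in \fitch(u)$ for every $u \in V(T)$, I would establish by a straightforward top-down induction. The base case $\overline{f}(\rt) = s \in \fitch(\rt)$ holds by assumption. For the inductive step, let $v$ be a child of a node $u$ with $\overline{f}(u)$ already defined: the first branch of Equation~(\ref{eq:fitch:extension}) assigns $\overline{f}(v) = \overline{f}(u)$ precisely when $\overline{f}(u) \in \fitch(v)$, while the second branch assigns some state of $\fitch(v)$ outright; in both cases $\overline{f}(v) \in \fitch(v)$. (Here one uses that $\fitch(v) \neq \emptyset$, so the second branch can always be executed.)

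For condition (ii), I would fix a non-leaf node $u$ with children $u_l, u_r$ and split into two cases. If $u$ is an intersection node then $\fitch(u) = \fitch(u_l) \cap \fitch(u_r)$, so by (i) we have $\overline{f}(u) \in \fitch(u_l)$ and $\overline{f}(u) \in \fitch(u_r)$; hence the first branch of Equation~(\ref{eq:fitch:extension}) fires for both children and $\overline{f}(u) = \overline{f}(u_l) = \overline{f}(u_r)$, as required. If $u$ is a union node then $\fitch(u) = \fitch(u_l) \cup \fitch(u_r)$ with $\fitch(u_l) \cap \fitch(u_r) = \emptyset$, so $\overline{f}(u)$ lies in exactly one of the two sets, say $\overline{f}(u) \in \fitch(u_l) \setminus \fitch(u_r)$. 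Then Equation~(\ref{eq:fitch:extension}) gives $\overline{f}(u_l) = \overline{f}(u)$, whereas $\overline{f}(u_r)$ is chosen from $\fitch(u_r)$, a set not containing $\overline{f}(u)$; hence $\overline{f}(u) = \overline{f}(u_l)$ and $\overline{f}(u) \neq \overline{f}(u_r)$, which is exactly the ``either\dots but not both'' alternative for union nodes.

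As the paper already signals by calling the result ``immediate'', I do not expect a real obstacle here. The only point I would flag as needing (very mild) care is the ``but not both'' clause for union nodes: this is precisely where the disjointness $\fitch(u_l) \cap \fitch(u_r) = \emptyset$ built into the definition of a union node is used, guaranteeing that whatever state of $\fitch(u_r)$ the algorithm picks for $\overline{f}(u_r)$, it necessarily differs from $\overline{f}(u) = \overline{f}(u_l)$.
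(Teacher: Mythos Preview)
Your proposal is correct and follows exactly the approach the paper signals: the observation is stated as ``immediate from the second phase of Fitch's algorithm,'' and you have simply spelled out the routine top-down induction verifying conditions (i) and (ii) of a Fitch-extension when Equation~(\ref{eq:fitch:extension}) is initialized with $\overline{f}(\rt)=s$. There is nothing to add.
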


\section{Chain reduction}
\label{sec:chain}

Let $T$ be an unrooted binary tree on $X$. For a leaf $x_i \in X$, let $p_i$ denote the internal node of $T$ adjacent to this leaf. Then, an ordered sequence $(x_1, \ldots, x_k)$ of $k$ taxa is called a  \emph{chain of length $k$} if $(p_1,p_2,\ldots,p_k)$ is a path in $T$. Note that here we allow that $p_1=p_2$ (i.e., $x_1$ and $x_2$ have a common parent) and/or  $p_{k-1}=p_k$ (i.e. $x_{k-1}$ and $x_{k}$ have a common parent): if at least one of these
situations occurs we say the chain is \emph{pendant}. (This is equivalent to definitions used
in earlier articles). A chain is \emph{common}  to $T_1$ and $T_2$ if it is a chain of both trees. 
Suppose $T_1$ and $T_2$ have a common chain $K =(x_1, \ldots, x_k)$ where $X(K)$ denotes the taxa in the chain and $k = |X(K)|\geq 5$.  Let $T'_1$, $T'_2$ be two new trees on $X' = (X \setminus X(K)) \cup \{x_1, x_2, x_{k-1}, x_{k}\}$ where $T'_1 = T_1|_{X'}$ and $T'_2 = T_2|_{X'}$. Then
we say that  $T'_1$, $T'_2$ have been obtained by \emph{reducing $K$ to length 4}.

\begin{thm}
\label{thm:reduc}
Let $T_1$ and $T_2$ be two unrooted binary trees on the same set of taxa $X$. Let $K$ be a common chain of length $k\geq 5$. Let $T'_1$ and $T'_2$ be the two trees obtained by reducing $K$ to length 4. Then $\dMP(T_1, T_2) = \dMP(T'_1, T'_2)$.
\end{thm}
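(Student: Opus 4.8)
The plan is to establish the two inequalities $\dMP(T_1,T_2)\ge\dMP(T'_1,T'_2)$ and $\dMP(T_1,T_2)\le\dMP(T'_1,T'_2)$ separately, in each case taking an optimal character for one pair of trees and producing a character for the other pair that is at least as good. The guiding idea is a locality observation about Fitch's algorithm. If we root $T_i$ on an edge near one end of the common chain, then the chain nodes $p_1,\dots,p_k$ are processed consecutively: each $\fitch(p_j)$ is obtained from the Fitch set of the preceding chain node by applying the operation $*$ with the singleton $\{g(x_j)\}$, and the only ``interface'' between the chain and the rest of $T_i$ consists of the Fitch set that enters the chain at one end and the Fitch set that $p_k$ presents onward. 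Moreover, deleting the chain from $T_1$ (respectively $T_2$) leaves at most two subtrees on $X\setminus X(K)$, and these subtrees — hence their contribution to the parsimony score and the sets they feed into the chain's ends — are \emph{unchanged} by reducing $K$ to length $4$. Thus $l_g(T_i)$ splits into an invariant ``outside'' part and a ``chain'' part depending only on $g|_{X(K)}$ and the chain's length, and it suffices to compare the local behaviour of a length-$k$ chain with that of a length-$4$ chain.

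For $\dMP(T'_1,T'_2)\le\dMP(T_1,T_2)$ I would take an optimal character $f'$ of $(T'_1,T'_2)$ and extend it to a character $f$ on $X$ by keeping its values on $X\setminus X(K)$ and on $x_1,x_2,x_{k-1},x_k$, and by choosing states for the new taxa $x_3,\dots,x_{k-2}$. The aim is to choose these states so that, running Fitch along the (common) chain in each $T_i$, the extra internal nodes become \emph{intersection} nodes — contributing no mutation and not altering the Fitch set that is passed further along the chain — so that $l_f(T_i)=l_{f'}(T'_i)$ for both $i$ and we are done. Concretely I would first try $f(x_3)=\dots=f(x_{k-2})=f'(x_2)$: after the (possible) union at $p_2$ the Fitch set at $p_3,\dots,p_{k-2}$ collapses to $\{f'(x_2)\}$, each such node being an intersection node, and one checks the set reaching $p_{k-1}$ then agrees with the one in $T'_i$. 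A short case analysis (on whether $f'(x_2)$ and $f'(x_{k-1})$ are contained in the set entering the chain from either side of $T_i$) shows this is neutral unless $f'(x_2)\neq f'(x_{k-1})$ and they are ``seen differently'' by the parts of $T_i$ beyond the chain; in the remaining case one instead puts $f'(x_2)$ on a prefix and $f'(x_{k-1})$ on a suffix of the new taxa, introducing a single controlled transition in the middle. Since $k\ge 5$ there are at least two new taxa, which is exactly what makes this prefix/suffix repair available; one then verifies, by the same Fitch bookkeeping, that the scores of $T_1$ and $T_2$ change by the same amount (or at least that $|l_f(T_1)-l_f(T_2)|$ does not drop below $|l_{f'}(T'_1)-l_{f'}(T'_2)|$).

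For $\dMP(T_1,T_2)\le\dMP(T'_1,T'_2)$ I would take an optimal character $f$ of $(T_1,T_2)$ and set $f'=f|_{X'}$, i.e.\ simply delete the middle taxa $x_3,\dots,x_{k-2}$ while keeping $x_1,x_2,x_{k-1},x_k$. The structural fact needed is that along a common chain the Fitch sets stabilise almost immediately, so that deleting the middle taxa changes the chain's parsimony cost function only at boundary pairs $(s,t)$ (states at $p_1$ and $p_k$) that cannot be the optimal choice when combined with the fixed ``outside'' costs of $T_i$. Here the two retained interior taxa $x_2$ and $x_{k-1}$ carry exactly the information that survives: roughly, $x_2$ governs the chain's behaviour at the $x_1$ end and $x_{k-1}$ governs it at the $x_k$ end, and keeping two taxa at each end preserves whether the corresponding runs have length $1$ or $\ge 2$, which is the only feature of the middle that a length-$4$ chain cannot already encode. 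One concludes $l_{f'}(T'_i)=l_f(T_i)-c$ for a common constant $c$, hence $|l_{f'}(T'_1)-l_{f'}(T'_2)|\ge|l_f(T_1)-l_f(T_2)|$.

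The main obstacle is precisely this pair of case analyses: proving that a length-$4$ chain is \emph{simultaneously} rich enough to mimic a long common chain in $T_1$ and in $T_2$, even though the ``context'' of the chain — the Fitch sets the rest of each tree feeds into the chain's ends — can be completely different in the two trees and is not under our control. A naive uniform relabelling of the interior chain taxa does not in general preserve the score difference (the additive offset can be $0$ for one tree and $1$ for the other), so the real content is to show that a suitable, possibly two-state, choice always makes the two offsets agree. The same analysis also explains why one cannot go below length $4$: with a single interior taxon a chain cannot distinguish a run of length $\ge 2$ from a run of length $1$, and this can be turned into an explicit pair $T_1,T_2$ with a common chain of length $4$ whose reduction to length $3$ strictly decreases $\dMP$.
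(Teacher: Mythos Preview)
Your two directions are misbalanced relative to the actual difficulty. The inequality $\dMP(T'_1,T'_2)\le\dMP(T_1,T_2)$ is immediate: since $T'_i=T_i|_{X'}$, it follows from the general fact (Corollary~3.5 of \cite{dMP-moulton}) that restricting both trees to any subset of taxa can only decrease $\dMP$. Your extension argument for this direction is therefore unnecessary.

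The substantive direction is $\dMP(T_1,T_2)\le\dMP(T'_1,T'_2)$, and here your proposal has a genuine gap. You claim that taking $f'=f|_{X'}$ gives $l_{f'}(T'_i)=l_f(T_i)-c$ for a \emph{common} constant $c$, justified by the assertion that ``Fitch sets stabilise almost immediately'' along the chain. This is false. After two chain nodes the Fitch set can still depend on the external set $\fitch_A$ (respectively $\fitch_C$), namely in the case $f(x_1)\neq f(x_2)$ and neither state lies in that set, where one obtains $\fitch_A\cup\{f(x_1),f(x_2)\}$. Concretely, take $k=7$, chain states $(1,2,0,0,0,2,1)$, and external Fitch sets $\fitch_A=\fitch_B=\{0\}$, $\fitch_C=\fitch_D=\{1\}$. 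A direct Fitch computation gives $m_1=m_2=4$ mutations in the chain region, so the two trees tie there. Restricting to $(x_1,x_2,x_6,x_7)=(1,2,2,1)$ yields $m'_1=4$ but $m'_2=2$: in $T'_1$ the set $\{0,1,2\}$ reaches the middle from both sides, while in $T'_2$ only $\{1,2\}$ does. The offsets are $0$ and $2$, not equal, and restriction \emph{lowers} the score gap by $2$. So simple restriction does not work, and your ``stabilisation'' heuristic does not hold.

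What the paper does instead is to \emph{relabel} the four retained chain taxa rather than restrict. One first bounds the chain-region mutation difference $p=m_2-m_1$ (it lies in $\{0,1,2\}$ in the non-pendant case, with analogous small ranges in the pendant cases), and then, by a case analysis on $p$ together with containment relations among $\fitch_A,\fitch_B,\fitch_C,\fitch_D$, one designs $f'$ on $\{x_1,x_2,x_{k-1},x_k\}$---typically using a fresh state $s$ and possibly a state drawn from $\fitch_A\setminus\fitch_C$ or $\fitch_B\setminus\fitch_D$---so that the chain region of $T'_1,T'_2$ realises an $(i,j)$ pair with $j-i=p$. The ``less constrained roots'' argument rules out the subcases where such states are unavailable. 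Your sketch contains none of this machinery; the missing idea is precisely that the retained chain taxa must in general be \emph{relabelled}, not merely kept.
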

\begin{proof}
Note that $\dMP(T'_1, T'_2) \leq \dMP(T_1, T_2)$ follows from Corollary 3.5 of \cite{dMP-moulton}, which
proves that for all $Y \subseteq X$, $\dMP( T_1 |_Y, T_2 |_Y ) \leq \dMP( T_1, T_2 )$. The inequality then follows from the definition of chain reduction.

It is considerably more involved to prove the claim that $\dMP(T'_1, T'_2) \geq \dMP(T_1, T_2)$ holds.

Without loss of generality, we may assume that $\dMP(T_1, T_2)>0$ (i.e., $T_1\not =T_2$) as otherwise the claim clearly holds. Note that this implies 
$X \not = X(K)$ and hence whenever $K$ is pendant
in a tree, at least one end of the chain is attached to the main part of the tree. 

We will prove the claim by considering the following three major cases: (I) the common chain is pendant in \emph{neither} tree, (II) the chain is pendant in preciesly one tree, and (III) the chain is pendant in both trees.\\ 
 \\
\textbf{I: Common chain is pendant in neither tree}\\
\\
 Let $f$ be an optimal character for $T_1$ and $T_2$
i.e. $|l_f(T_1) - l_f(T_2)| = \dMP(T_1, T_2)$. Assume without loss of generality that $l_f(T_1) < l_f(T_2)$, so $\dMP(T_1, T_2) = l_f(T_2) - l_f(T_1)$. 

\begin{figure}[h]
 \begin{center}
  \includegraphics[width=10cm]{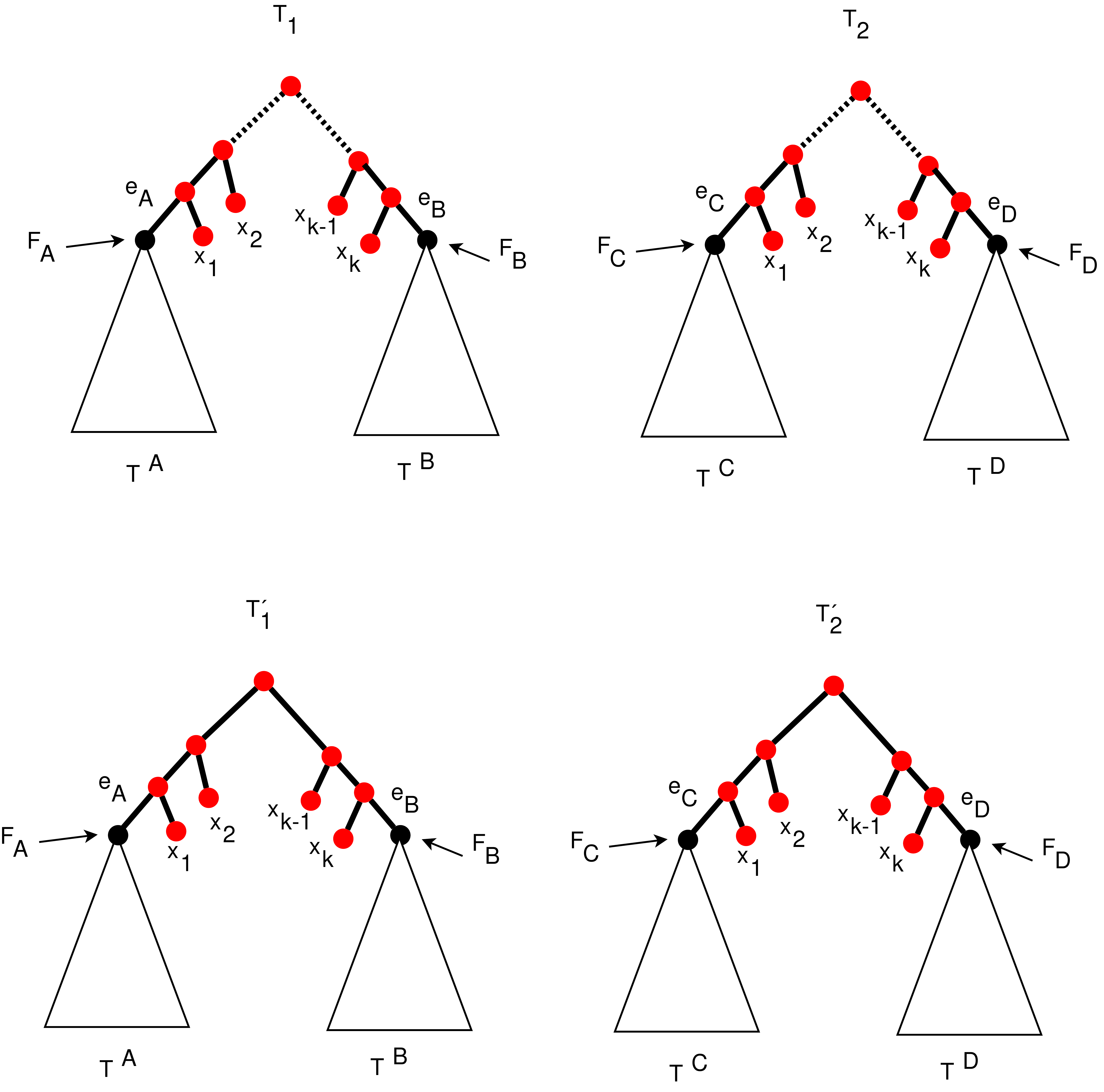}
 \end{center}
 \caption{The chain reduction as applied in the case when the common chain $K$ is pendant in neither tree. Note that in
$T_1$ and $T_2$ a dotted line is used to denote the taxa $\{x_3, \ldots, x_{k-2}\}$ which are removed by the chain
reduction. All the trees in the figure are unrooted, but for the purpose of proving correctness of the chain reduction we have shown them as rooted. $T'_1$ and $T'_2$ must be rooted exactly halfway along the chain, as shown. For $T_1$ and $T_2$ it is not so important where the tree is rooted as long as the root is in the same part of the chain in both trees.}
\label{fig:neitherpendant}
\end{figure}

Let $T^{A}, T^{B}, T^{C}, T^{D}$ refer to the 4 subtrees of $T_1, T_2$ shown in Figure \ref{fig:neitherpendant}. For $P \in \{A,B,C,D\}$, let $e_{P}$ refer to the edge incoming to the root of $T^{P}$;  let $X_{P}$ refer to the taxa in subtree $T^{P}$; let $f_{P}$ denote the character obtained
by restricting $f$ to $X_{P}$, and let $\fitch_{P}$ refer to the set of states assigned to the root of $T^{P}$ by the Fitch map induced by $f_{P}$. (Note that $X_A \cup X_B = X_C \cup X_D$.) For each tree $T \in \{T_1, T_2, T'_1, T'_2\}$ we define the \emph{chain region} of $T$ to be the set of edges incident to at least one red node (as shown in Figure \ref{fig:neitherpendant}).  Let $m_i$ $(i=1,2)$ be the number of union nodes among red nodes, which is the same as the number of mutations occuring in the chain region of $T_i$ for a Fitch-extension of $f$. Then, 
\begin{align*}
m_1&= l_f(T_1)- l_{f_A}(T^{A}) - l_{f_B}(T^{B})  ~\mbox{and} \\
m_2&=l_f(T_2) -l_{f_C}(T^{C}) - l_{f_D}(T^{D}). 
\end{align*}
 In addition, let $p=m_2-m_1$ and then we have
\begin{equation}
\label{eq:p}
\dMP(T_1, T_2) =  l_{f_C}(T^{C}) + l_{f_D}(T^{D}) -  l_{f_A}(T^{A}) - l_{f_B}(T^{B}) + p.
\end{equation}

First we shall show that $p\leq 2$. To this end, fix a Fitch-extension $\overline{f}_1$ of $f$ to $T_1$, and consider an extension
$\overline{f_2}$ of $f$ to $T_2$ obtained by combining a minimum extension of 
$f_C$ to $T^C$, a minimum extension of $f_D$ to $T^D$, and 
exactly mimicking $\overline{f_1}$ on the red nodes of $T_2$ (as indicated in Figure~\ref{fig:neitherpendant}).
 Then compared with $\overline{f}_1$, the extension $\overline{f}_2$ creates at most two new mutations  on the chain region (i.e.  edges $e_C$ and $e_D$). In other words,  we have 
$\ch(\overline{f}_2)\leq l_{f_C}(T^{C}) + l_{f_D}(T^{D})+(m_1+2)$. Together with $l_f(T_2) \leq \ch(\overline{f}_2)$ and $l_f(T_1)=l_{f_A}(T^{A}) + l_{f_B}(T^{B})+m_1$, this implies
\begin{eqnarray}
p&=&l_f(T_2) - l_f(T_1)- l_{f_C}(T^{C}) - l_{f_D}(T^{D}) +  l_{f_A}(T^{A}) + l_{f_B}(T^{B}) \nonumber \\
&=& l_f(T_2) -m_1- l_{f_C}(T^{C}) - l_{f_D}(T^{D})\nonumber \\
&\leq & \ch(\overline{f}_2)-m_1- l_{f_C}(T^{C}) - l_{f_D}(T^{D}) \nonumber \\
&\leq& 2.   \label{eq:p:up}
\end{eqnarray}

Next we show $p\geq 0$.  Consider a new (not necessarily optimal) character $f^*$ obtained from $f$ by reassigning all the taxa in $X(K)$ to a new state $s$ that does not appear anywhere on $X\setminus X(K)$. Considering Fitch-extensions of $f^{*}$ to $T_1$ and to $T_2$ we observe that $T_1$ and $T_2$ will both incur exactly 2 mutations in their chain regions, namely on edges $e_A$, $e_B$ and $e_C$, $e_D$, respectively. That is, we have
 \begin{equation}
 \label{eq:2:mutation}
 l_{f^*}(T_1)=l_{f_A}(T^{A}) +l_{f_B}(T^{B}) +2~~\mbox{and}~~ l_{f^*}(T_2)=l_{f_C}(T^{C}) +l_{f_D}(T^{D}) +2.
 \end{equation} 
Since the optimality of $f$ implies $l_f(T_2) - l_f(T_1)\geq l_{f^*}(T_2) - l_{f^*}(T_1)$, by Equation~(\ref{eq:2:mutation}) we have
\begin{eqnarray}
p&=&l_f(T_2) - l_f(T_1)- l_{f_C}(T^{C}) - l_{f_D}(T^{D}) +  l_{f_A}(T^{A}) + l_{f_B}(T^{B}) \nonumber \\
&\geq& l_{f^*}(T_2) - l_{f^*}(T_1)- l_{f_C}(T^{C}) - l_{f_D}(T^{D}) +  l_{f_A}(T^{A}) + l_{f_B}(T^{B}) \nonumber \\
&=& 0. \label{eq:p:low}
\end{eqnarray}

\bigskip
By Equation~(\ref{eq:p}),  the claim  $\dMP(T'_1, T'_2) \geq \dMP(T_1, T_2)$ will follow from 
\begin{equation}
\label{eq:p:subtree}
\dMP(T'_1, T'_2) \geq l_{f_C}(T^{C}) + l_{f_D}(T^{D}) -  l_{f_A}(T^{A}) - l_{f_B}(T^{B}) + p.
\end{equation}
Therefore, to establish main case ({\bf I}) it is sufficient to establish  Equation~(\ref{eq:p:subtree}), which will be done through case analysis on $p$. To shorten notation we will write $f[a,b,c,d]$ to denote the character on $X'$ obtained
from $f$ (which is a character on $X$) by leaving the states assigned to taxa in $X_A \cup X_B = X_C \cup X_D$ intact and assigning states
$a,b,c,d$ to $x_1, x_2, x_{k-1}, x_{k}$ respectively. (Occasionally we will manipulate
$f$ to obtain a new character $f^{*}$ also on $X$, and then the expression $f^{*} = f[a,b,\ldots,c,d]$ is overloaded to denote the reassignment of states to the taxa in the original
chain $K$, not the reduced chain.) Since $p$ is an integer with $0\leq p\leq 2$, we have the following three cases to consider.\\
\\
\noindent
\emph{Case 1:} $p=0$. Let $f' = f[s,s,s,s]$ where $s$ is a state that does not appear elsewhere. Then by the 
``both trees incurring exactly 2 mutations in their chain regions for Fitch-extensions'' reason used in the proof of Equation~(\ref{eq:2:mutation}), we have
$l_{f'}(T'_2) - l_{f'}(T'_1) = l_{f_C}(T^{C}) + l_{f_D}(T^{D}) -  l_{f_A}(T^{A}) - l_{f_B}(T^{B})$, from which Equation~(\ref{eq:p:subtree}) holds.\\
\\
\emph{Case 2:} $p=1$. We require a subcase analysis on $\fitch_A, \fitch_B, \fitch_C, \fitch_D$. 

\begin{enumerate}
\item[(i)] $\fitch_A \setminus \fitch_C \neq \emptyset$: Let $a \in \fitch_A \setminus \fitch_C$. Consider a state $s$, which is a state that does not appear elsewhere, and 
the character $f' = f[a,s,s,s]$. If
we consider Fitch-extensions of $f'$ on $T'_1$ and on $T'_2$, we
see that in $T'_1$ there are exactly 2 mutations incurred in the chain region, and in $T'_2$
exactly 3, and we are done, because we now have $l_{f'}(T'_1)=l_{f_A}(T^{A})+ l_{f_B}(T^{B})+2$ and $l_{f'}(T'_2)=l_{f_C}(T^{C})+ l_{f_C}(T^{C})+3$, so $\dMP(T'_1,T'_2) \geq l_{f'}(T'_2)-l_{f'}(T'_1)=\dMP(T_1,T_2)$. The latter equality is true because we are in the case where $p=1$. For brevity we henceforth speak of ``an $(i,j)$ situation'' when there are $i$ mutations in the chain region in tree $T'_1$ and $j$ in $T'_2$, so in this case we have a (2,3) situation.

\item[(ii)]  $\fitch_B \setminus \fitch_D \neq \emptyset$: This is symmetrical to the previous case.

\item[(iii)]  $(\fitch_A \subseteq \fitch_C) \wedge (\fitch_B \subseteq \fitch_D)$: This case cannot occur. Intuitively,
$T_2$ is ``less constrained'' than $T_1$ at the roots of the subtrees, so there is no way
that $T_1$ can use the chain region to save mutations relative to $T_2$. More formally,
consider a Fitch-extension $\overline{f_1}$ of $f$ to $T_1$. Then by definition $\overline{f_1}$  assigns a state $a$ from $\fitch_{A}$ to the root of $T_A$, and a state $b$ from $\fitch_{B}$ to the root of $T_B$ (where $a$ and $b$ are not necessarily different). Since $a\in \fitch_C$, by Observation~\ref{obs:fitch2general}, we fix a Fitch-extension $\overline{f}_C$ of 
$f_C$ to $T^C$ that maps the root of $T^C$ to $a$. Similarly, we fix  a Fitch-extension $\overline{f}_D$ of $f_D$ to $T^D$ that maps the root of $T^D$ to $b$. Now consider the extension $\overline{f_2}$ of $f$ to $T_2$ obtained by combining $\overline{f}_C$, $\overline{f}_D$, and 
exactly mimicking $\overline{f_1}$ for the red nodes of $T_2$. 
Then the number of mutations induced by $\overline{f_2}$ in the chain region of $T_2$ is exactly the same as that by $\overline{f_1}$ in the chain region of $T_1$. In other words, we have $\ch(\overline{f}_2)= l_{f_C}(T^{C}) + l_{f_D}(T^{D})+m_1$, from which we conclude that, if $(\fitch_A \subseteq \fitch_C) \wedge (\fitch_B \subseteq \fitch_D)$, then
\[
\dMP(T_1, T_2) =l_{f}(T_2)-l_{f}(T_1) \leq \ch(\overline{f}_2)-l_{f}(T_1)
=  l_{f_C}(T^{C}) + l_{f_D}(T^{D}) -  l_{f_A}(T^{A}) - l_{f_B}(T^{B}).
\]
In particular, this shows $p \leq 0$, a contradiction. We will re-use (slight variations of) this argument repeatedly to show that certain subcases cannot occur. For brevity we will refer to it as the \emph{less constrained roots} argument.

\end{enumerate}

\noindent
\emph{Case 3:} $p=2$.  Then we have the following two subcases to consider.
\begin{enumerate}
\item[(i)]  $(\fitch_A \setminus \fitch_C \neq \emptyset) \wedge (\fitch_B \setminus \fitch_D \neq \emptyset)$: Let $a \in \fitch_A \setminus \fitch_C$ and $b \in \fitch_B \setminus \fitch_D$. We take character $f' = f[a,s,s,b]$
where $s$ does not occur elsewhere. This is a $(2,4)$ situation.

\item[(ii)]  ($\fitch_A \subseteq \fitch_C) \vee (\fitch_B \subseteq \fitch_D)$: By a variant of the \emph{less constrained roots} argument, we know this case cannot
occur as otherwise it leads to $p\leq 1$, a contradiction.
\end{enumerate}
\noindent
\textbf{II: Common chain is pendant in exactly one tree}\\

\begin{figure}[h]
 \begin{center}
  \includegraphics[width=10cm]{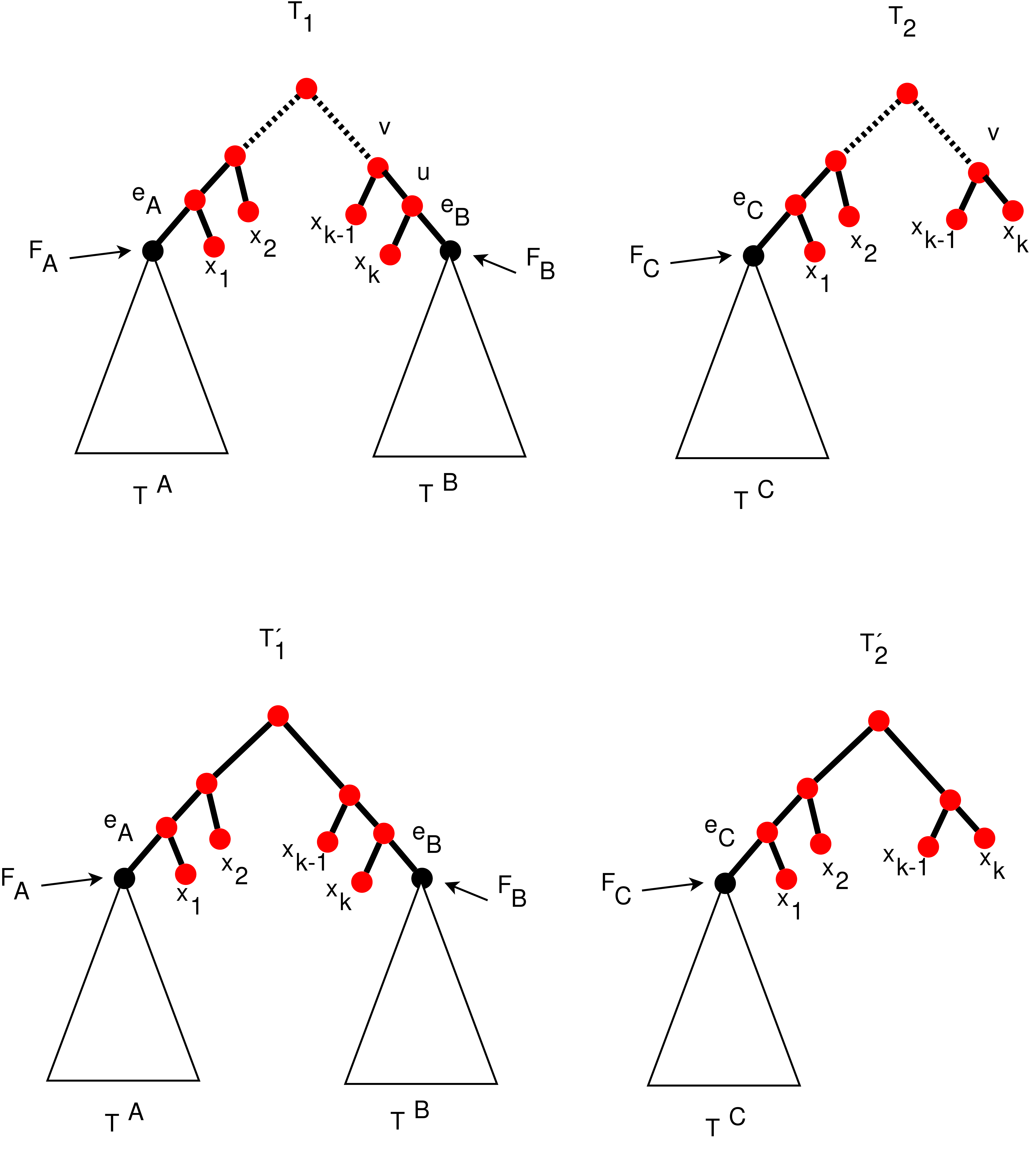}
 \end{center}
\caption{The situation when the common chain is pendant in exactly one tree.}
\label{fig:onependant}
\end{figure}

Without loss of generality we assume that $K$ is pendant in $T_2$ and that the situation is as described in Figure \ref{fig:onependant}. Let $f$ be an optimal character. Then we have the following two cases.

\medskip
\noindent
\emph{Case 1:} In this first case $l_f(T_1) < l_f(T_2)$, so
$\dMP(T_1, T_2) = l_f(T_2) - l_f(T_1)$. As in Equation~(\ref{eq:p}) we have,
\begin{equation}
\label{eq:p1}
\dMP(T_1, T_2) =  l_{f_C}(T^{C})  -  l_{f_A}(T^{A}) - l_{f_B}(T^{B}) + p.
\end{equation}
In this case, $p \leq 1$ because of the usual mimicking construction (i.e. copying the
states allocated to the red nodes in $T_1$, to $T_2$) used in the proof of Equation~(\ref{eq:p:up}). That is, at most 1 extra mutation incurs in $T_2$ (i.e. on the edge $e_{C})$\footnote{
Here the mimicking construction must deal with a slight technicality: node $u$ in $T_1$ (see Figure \ref{fig:onependant}) does not exist in $T_2$. However,  simply ignoring $u$ in this case (and elsewhere mapping $v$ to $v$) has the desired
effect: if there is a mutation on edge $(v,x_{k})$ in $T_2$ then there must have been at least one mutation on the edges $(v,u)$ and $(u,x_{k})$ in $T_1$.}.
 On the other hand $p \geq 0$ follows from an argument similar to that for proving Equation~(\ref{eq:p:low}). That is,  we can always relabel $f$
to a new character $f^{*} = f[a,s,\ldots,s,b]$ where $a \in \fitch_A$, $b \in \fitch_B$ and $s$ is a state that does not appear elsewhere. This is either a $(2, 2)$ or a $(2,3)$ situation, proving that $p \geq 0$. Hence, in Equation~(\ref{eq:p1}),
we have $p \in \{0,1\}$, and hence it remains to prove that
$$\dMP(T'_1, T'_2) \geq l_{f_C}(T^{C})  -  l_{f_A}(T^{A}) - l_{f_B}(T^{B}) + p$$
holds, which will be done by considering the following two subcases. 

\begin{enumerate}
\item[(i)]  $p=0$: Suppose first $\fitch_A \not \subseteq \fitch_C$. Let $a \in \fitch_A \setminus \fitch_C$. Note that $a \not \in \fitch_B$  because otherwise the character $f^{*} = f[a,a,...,a,a]$ would lead to a $(0,1)$ situation, contradicting $p=0$. This implies that the character $f' = f[a,a,a,a]$ is a $(1,1)$ situation
and we are done. So suppose next $\fitch_A \subseteq \fitch_C$.  If $\fitch_A \cap \fitch_B \neq \emptyset$ then let $a \in \fitch_A \cap \fitch_B$. Clearly $a \in \fitch_C$. Taking character $f'= f[a,a,a,a]$ yields
a $(0,0)$ situation and we are done. Otherwise, $\fitch_A \cap \fitch_B = \emptyset$. In this
situation, let $a \in \fitch_A \cap \fitch_C$ and let $b \in \fitch_B$. (Clearly, $a \neq b$). Consider character
$f' = f[a,a,b,b]$. This is a $(1,1)$ situation and we are done.

\item[(ii)]   $p=1$: Suppose $\fitch_A \not \subseteq \fitch_C$. Let $a \in \fitch_A \setminus \fitch_C$. If $a \in \fitch_B$ then we take $f'=f[a,a,a,a]$. This is a $(0,1)$ situation
and we are done. If $a \not \in \fitch_B$, then let $b \neq a$  be an arbitrary element of
$\fitch_B$. We take $f' = f[a,a,b,b]$, this is a $(1,2)$ situation and we are done. The
only subcase that remains is $\fitch_A \subseteq \fitch_C$, but this cannot happen by the
\emph{less constrained roots} argument.
\end{enumerate}

\medskip
\noindent
\emph{Case 2:} We have $l_f(T_2) < l_f(T_1)$, so $\dMP(T_1, T_2) = l_f(T_1) - l_f(T_2)$.
In such a case we have
\begin{equation}
\dMP(T_1, T_2) =  l_{f_A}(T^{A}) +  l_{f_B}(T^{B}) -  l_{f_C}(T^{C})  +  p.
\end{equation}
We have $p \leq 2$, by the usual mimicking argument, but this time the red nodes in $T_1$
copy their states from $T_2$ and not the other way round. (Nodes $u$ and $x_k$ in $T_1$ should both be assigned
the state that is assigned to $x_k$ in $T_2$). Also, $p \geq 1$ because we can relabel $f$ to a new character $f^{*} = f[s,s,\ldots,s,s]$ where $s$ is a state that
does not appear elsewhere. This is a $(2,1)$ situation. Hence, $p \in \{1,2\}$.
and hence it remains to prove that
$$\dMP(T'_1, T'_2) \geq   l_{f_A}(T^{A}) + l_{f_B}(T^{B})-l_{f_C}(T^{C}) + p$$
holds, which will be done by considering the following two subcases. 

\begin{enumerate}

\item[(i)]  $p=1$. Take $f' = f[s,s,s,s]$, where $s$ is a state that does not appear elsewhere. This is a $(2,1)$ situation, and we are done.

\item[(ii)]   $p=2$. Suppose $\fitch_C \not \subseteq \fitch_A$. Consider $f' = f[c,c,s,s]$ where $s$ is a state that does not occur elsewhere and $c \in \fitch_C \setminus \fitch_A$. This is a $(3,1)$ situation and we are done. The only remaining case is $\fitch_C \subseteq \fitch_A$: but this is not possible
by the \emph{less constrained roots} argument.

\end{enumerate}
\noindent
\textbf{III: Common chain is pendant in both trees}\\
\begin{figure}[h]
 \begin{center}
  \includegraphics[width=10cm]{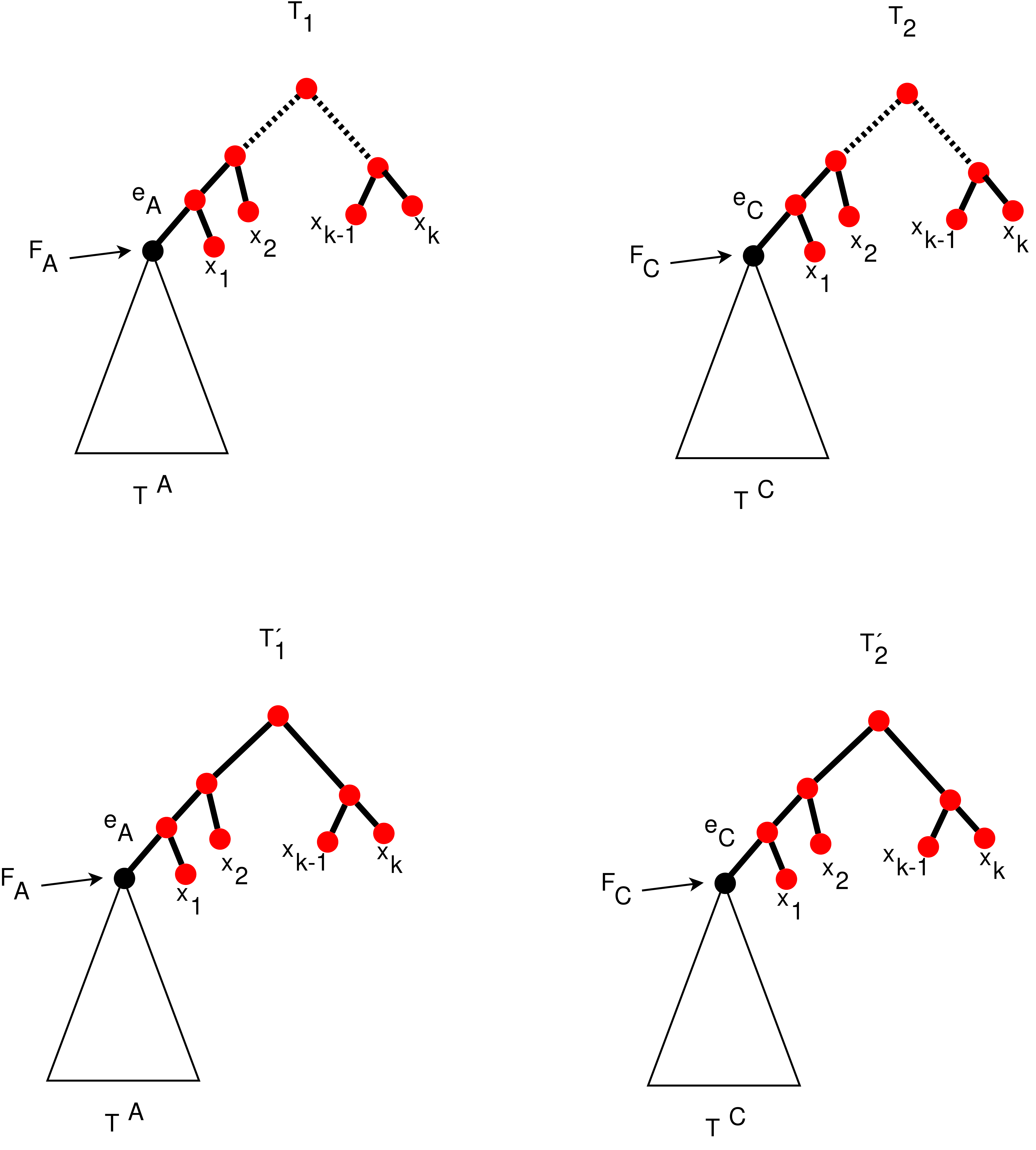}
 \end{center}
\caption{The situation when the common chain $K$ is pendant in both trees and the chain
is oriented in the same direction in both trees (relative to the point of contact with the rest of the tree).}
\label{fig:onependantsync}
\end{figure}
\begin{figure}[h]
 \begin{center}
  \includegraphics[width=10cm]{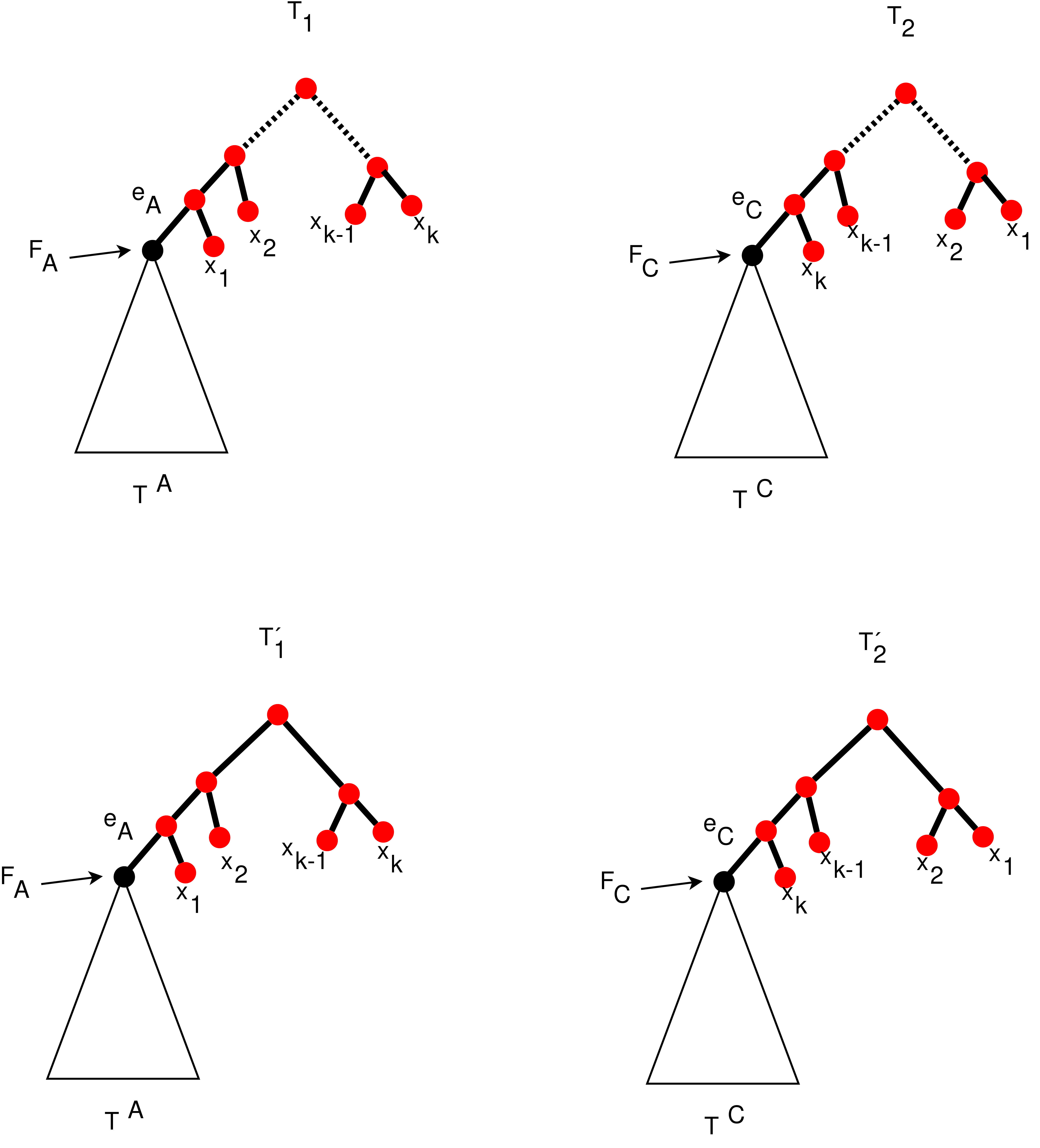}
 \end{center}
\caption{The situation when the common chain is pendant in both trees and the chain
is oriented in different directions in the two trees (relative to the point of contact with the rest of the tree).}
\label{fig:onependantunsync}
\end{figure}

There are two main situations here: the chains are oriented in the same direction (Figure \ref{fig:onependantsync}),
and the chains are oriented in the opposite direction (Figure \ref{fig:onependantunsync}). Whichever situation occurs, we can
assume without loss of generality that $l_f(T_1) < l_f(T_2)$, so
$\dMP(T_1, T_2) = l_f(T_2) - l_f(T_1)$. As in Equation~(\ref{eq:p}) we have,
\begin{equation}
\label{eq:p:III}
\dMP(T_1, T_2) =  l_{f_C}(T^{C})  -  l_{f_A}(T^{A})  + p.
\end{equation}
Note that we have $p \geq 0$ by the familiar trick of assigning all the taxa in $X(K)$ a state that does not occur elsewhere and $p \leq 1$ by the mimicking construction. It remains to show that $$\dMP(T'_1, T'_2) \geq l_{f_C}(T^{C})  -  l_{f_A}(T^{A})  + p$$
holds, which can be done by considering the following three cases:

\medskip
\noindent
\emph{Case 1:} $p=0$. In this case we can just take $f' = f[s,s,s,s]$ where $s$ is a state that does not appear elsewhere: this is a $(1,1)$ situation, and we are done.

\medskip
\noindent
\emph{Case 2:}  $p=1$ and we are in the same-direction situation. Observe that
$\fitch_A \subseteq \fitch_C$ cannot hold by the \emph{less constrained roots}
argument. So $\fitch_A \not \subseteq \fitch_C$. Let $a \in \fitch_A \setminus \fitch_C$. Consider
the character $f' = f[a,s,s,s]$ where $s$ is a state that does not appear elsewhere. This
is a $(1,2)$ situation and we are done. 

\medskip
\noindent
\emph{Case 3:} $p=1$ and we are in the opposite-direction situation. Then take
$f' = f[a,a,s,s]$ where $a \in \fitch_A$ and $s$ is a state that does not occur elsewhere. This is a $(1,2)$ situation (note that here
we are exploiting the fact that $K$ is reversed in $T_2$ relative to $T_1$, the
status of $\fitch_C$ is not relevant here), so we are done.
\end{proof}

Note that Theorem \ref{thm:reduc} is in some sense best possible, since reducing common chains to length 3 can
potentially alter $\dMP$; see Figure \ref{fig:bestpossible} for a concrete example. Here
$\dMP(T_1,T_2) \geq 2$ (due to character $abcdefgh=00001111$) and $\dMP(T_1,T_2) \leq \dTBR(T_1,T_2) \leq 2$ - due to the agreement forest $\{\{a,b\}, \{c,d,e,f\}, \{g,h\}\}$ - so
$\dMP(T_1,T_2) = \dTBR(T_1,T_2)=2$. However, $\dMP(T'_1,T'_2) = 1$  (achieved by
character $abdefgh=0000111$); the fact that $\dMP(T'_1, T'_2) \leq 1$ can be verified
computationally.

The chain reduction can easily be performed in polynomial time, and it can be applied at most a polynomial number of times because each application of the reduction reduces the number of taxa by at least 1. Hence, we obtain the following corollary.
\begin{cor}
Let $T_1$ and $T_2$ be two unrooted binary trees on the same set of taxa $X$. Then
it is possible to transform $T_1, T_2$ to $T'_1, T'_2$ in polynomial time such that all
common chains in $T'_1, T'_2$ have length at most 4 
and $\dMP(T_1, T_2) = \dMP(T'_1, T'_2)$.
\end{cor}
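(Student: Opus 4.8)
The plan is to derive the corollary by iterating Theorem~\ref{thm:reduc} exhaustively and then checking that the whole procedure runs in polynomial time. First I would set up the following loop: while the current pair of trees has a common chain $K$ of length at least $5$, replace the pair by the one obtained from reducing $K$ to length $4$. By Theorem~\ref{thm:reduc}, each such replacement leaves $\dMP$ unchanged, so a one-line induction on the number of iterations shows that the final pair $T'_1,T'_2$ satisfies $\dMP(T_1,T_2)=\dMP(T'_1,T'_2)$.

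Next I would bound the number of iterations. Reducing a common chain of length $k\ge 5$ to length $4$ deletes exactly $k-4\ge 1$ taxa, so $|X|$ strictly decreases at each iteration; hence the loop halts after at most $|X|$ iterations. When it halts, the pair has no common chain of length $\ge 5$, which is precisely the assertion that every common chain of $T'_1,T'_2$ has length at most $4$ (the two formulations are trivially equivalent, and note that if one prefers to search only for chains of length exactly $5$, any longer common chain contains a consecutive subchain of length $5$ which is again a common chain).

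It then remains to observe that a single iteration is polynomial. Performing one chain reduction, i.e.\ forming $T_1|_{X'}$ and $T_2|_{X'}$, is clearly polynomial by the definition of the restriction operation; the only point needing a word is detecting a common chain of length $\ge 5$ (or certifying none exists). For this I would invoke the standard technique already used for the same purpose in the $\tbr$/$\spr$ kernelization literature (see~\cite{allen01,bordewich07,hickey08}): the maximal chains of a single tree can be read off in linear time by inspecting the degree-$3$ vertices carrying a pendant leaf, and one can then test in polynomial time whether two such chains (read in either direction) coincide as a common chain, recording its length. Combining the polynomial per-iteration cost with the linear bound on the number of iterations yields the claimed polynomial overall running time. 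There is no genuine obstacle here: the substance of the result is entirely contained in Theorem~\ref{thm:reduc}, and the corollary is a routine ``apply the reduction until it no longer applies'' argument; the only things to be careful about are the termination bookkeeping and the (immediate) equivalence between having no common chain of length $\ge 5$ and having all common chains of length at most $4$.
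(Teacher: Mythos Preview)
Your proposal is correct and follows the same approach as the paper: iterate Theorem~\ref{thm:reduc}, note that each application removes at least one taxon (bounding the number of iterations by $|X|$), and observe that each iteration can be carried out in polynomial time. The paper's own justification is in fact just a single sentence to this effect preceding the corollary, so your write-up is a more detailed version of exactly the intended argument.
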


\begin{figure}[h]
 \begin{center}
  \includegraphics[width=10cm]{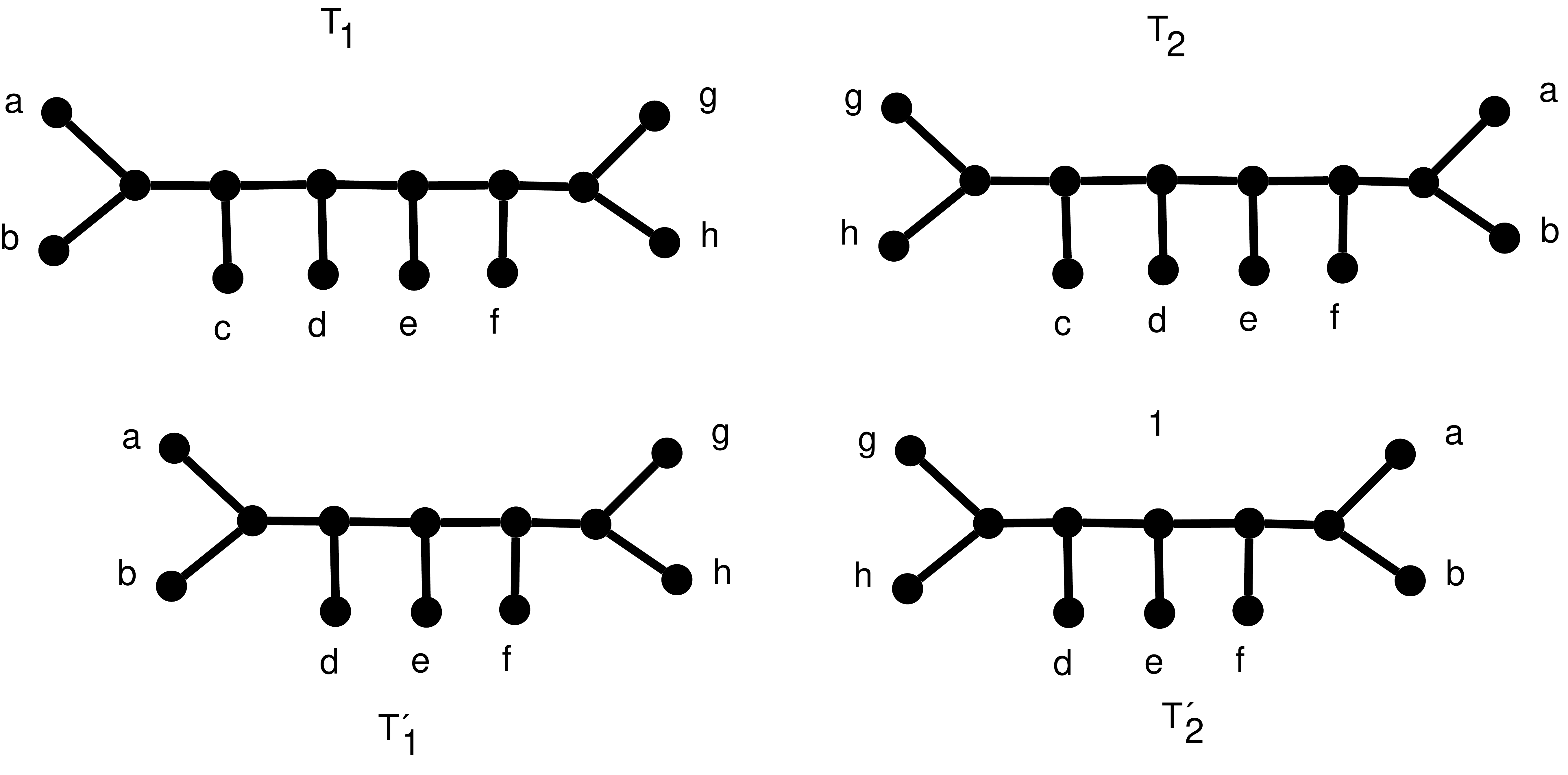}
 \end{center}
 \caption{  Here $\dMP(T_1, T_2)=2$, while $\dMP(T'_1, T'_2) = 1$. This
shows that reducing common chains to length 3 does not preserve $\dMP$. Note that
$\dTBR(T_1,T_2)=\dTBR(T'_1,T'_2)=2$, because $\dTBR$ \emph{is} preserved under
reduction of chains to length 3 \cite{allen01}.}
\label{fig:bestpossible}
\end{figure}

\section{A generalized subtree reduction}
\label{sec:subtree}

Let $T_1$ and $T_2$ be two unrooted binary trees on a set of taxa $X$. A \emph{split} $A|B$ (on $X$) is simply a bipartition of $X$ i.e. $A \cap B = \emptyset$, $A \cup B = X$, $A, B \neq \emptyset$. For a phylogenetic tree $T$ on $X$, we say that edge $e$ \emph{induces} a split
$A|B$ if, after deleting $e$, $A$ is the subset of taxa appearing in one connected component
and $B$ is the subset of taxa appearing in the other. 

\begin{figure}[h]
 \begin{center}
  \includegraphics[width=10cm]{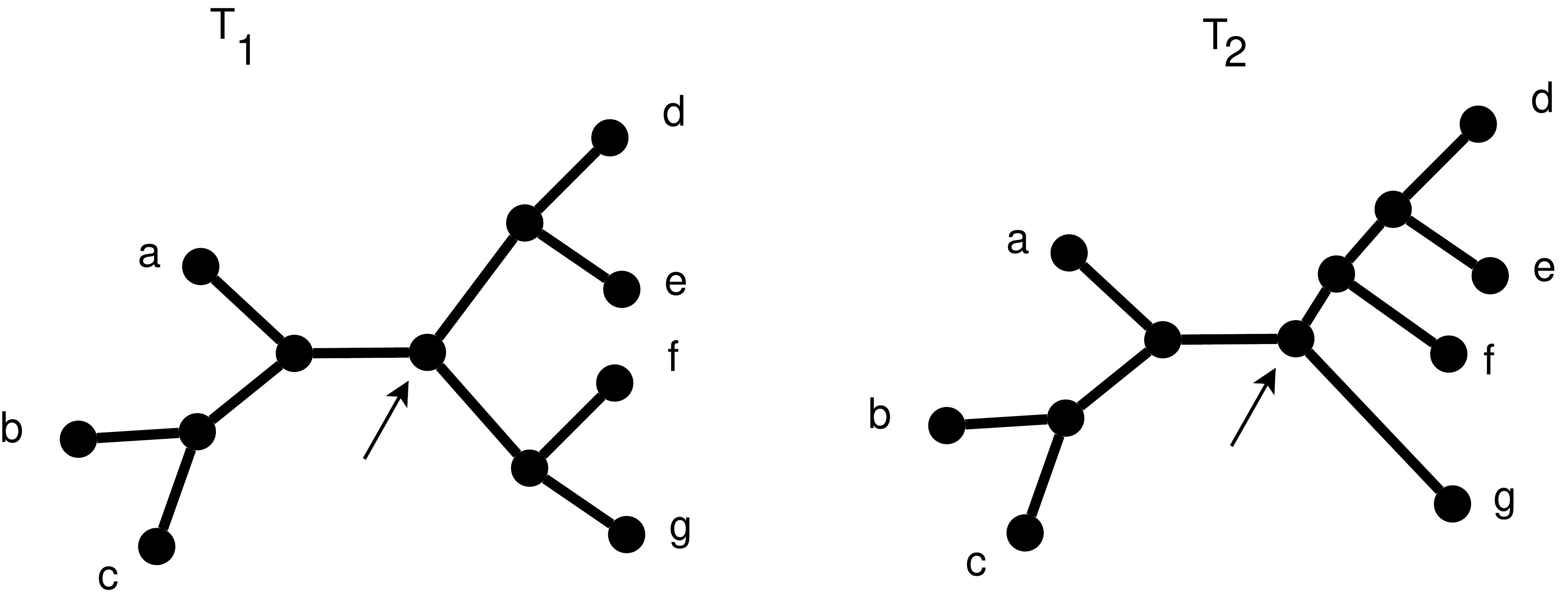}
 \end{center}
 \caption{Here $T_1$ and $T_2$ have a common pendant subtree on $\{a,b,c\}$, and
a common pendant subtree ignoring root location on $\{d,e,f,g\}$. Note that $T_1|_{\{d,e,f,g\} }= T_2|_{\{d,e,f,g\}}$ but
the rooted variants $(T_1|_{\{d,e,f,g\}})^\rt$ and $(T_2|_{\{d,e,f,g\}})^\rt$ are not equal because they have different
root locations (indicated here with an arrow).} 
\label{fig:subtreeExample}
\end{figure}

Consider $X' \subset X$. We say that $T_1$ and $T_2$ have a \emph{common pendant subtree ignoring root location (i.r.l.) on} $X'$ if (1) for $i \in \{1,2\}$, $T_i$ contains an
edge $e_i = \{u_i, v_i\}$ such that $e_i$ induces a split 
$(X \setminus X')| X'$ in $T_i$ and (2) $T_1 | X' = T_2 | X'$. Now, assume without loss of generality
that for $i \in \{1,2\}$, $v_i$ is the endpoint of edge $e_i$ that is closest to taxon set $X'$. The
node $v_i$ can be used to ``root'' $T_i|_{X'}$, yielding a rooted binary phylogenetic
tree on $X'$ which we denote $(T_i|_{X'})^\rt$.  If $T_1$ and $T_2$ have the additional property
that $(T_1|_{X'})^\rt = (T_2|_{X'})^\rt$ (where here the equality operator is acting over
rooted trees), then we say that $T_1$ and $T_2$ have a \emph{common pendant subtree} on $X'$. Clearly, a common pendant subtree on $X'$ is also a common
pendant subtree i.r.l. on $X'$, but the other direction does not necessarily hold. The
following reduction takes both types of subtrees into account.\\
\\
%
\textbf{Generalized subtree reduction:} Let $T_1$ and $T_2$ be two unrooted binary trees on $X$. Let $X'$ be a subset of $X$ such that $|X'| \geq 2$. (If $T_1 = T_2$ and $X' = X$,
then clearly $\dMP(T_1, T_2)=0$ so $T_1$ and $T_2$ can simply be replaced with a single
taxon. We henceforth assume $X' \subset X$). Suppose $T_1$ and $T_2$ have a common pendant subtree i.r.l. on $X'$. We construct a reduced pair of trees $T'_1$ and $T'_2$
as follows. If $T_1$ and $T_2$ have a common pendant subtree on $X'$,
we are in the \emph{traditional} case. If they do not, and $|X'| \geq 4$, we are in the
\emph{extended} case. If we are in neither case, the generalized subtree reduction does not apply.
\begin{itemize}
\item \textbf{Traditional case.} Let $T'_1 = T_1 |_{(X \setminus X') \cup \{x\}}$ and
$T'_2 = T_2 |_{(X \setminus X') \cup \{x\}}$ where $x \in X'$. (This
is the ``traditional'' subtree reduction, as described in e.g. \cite{allen01} and \cite{dMP-moulton}.)

\item \textbf{Extended case}.  Without loss of generality let $x,y,z$ be distinct taxa in $X'$ such that in $(T_1|_{X'})^\rt$, $x$ and $y$ are on one side of the root, and $z$ on the other, while in $(T_2|_{X'})^\rt$
$x$ and $z$ are on one side of the root, and $y$ on the other. These taxa always
exist because $(T_1|_{X'})^\rt \neq (T_2|_{X'})^\rt$. We let
$T'_1 = T_1 | (X \setminus X') \cup \{x,y,z\}$ and $T'_2 = T_2 | (X \setminus X') \cup \{x,y,z\}$.
\end{itemize}

Note that the reduction can easily be applied in polynomial time. Also, each application
reduces the number of taxa by at least one, so if the reduction is applied repeatedly it
will stop after at most polynomially many iterations.

\begin{thm}
\label{thm:gensubtree}
Let $T_1$ and $T_2$ be two unrooted binary trees on the same set of taxa $X$. Suppose that  $T'_1$ and $T'_2$ are two reduced trees obtained by applying the generalized
subtree reduction to $T_1$ and $T_2$. Then $\dMP(T'_1, T'_2) = \dMP(T_1, T_2)$.
\end{thm}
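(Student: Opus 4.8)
The plan is to prove the two inequalities $\dMP(T'_1,T'_2) \le \dMP(T_1,T_2)$ and $\dMP(T'_1,T'_2) \ge \dMP(T_1,T_2)$ separately, exactly as in the proof of Theorem~\ref{thm:reduc}. The first inequality is immediate in all cases: since $T'_i = T_i|_{Y}$ for a suitable $Y \subseteq X$ (with $Y = (X\setminus X')\cup\{x\}$ in the traditional case and $Y = (X\setminus X')\cup\{x,y,z\}$ in the extended case), Corollary~3.5 of \cite{dMP-moulton} gives $\dMP(T'_1,T'_2) = \dMP(T_1|_Y, T_2|_Y) \le \dMP(T_1,T_2)$. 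So the entire work lies in the reverse inequality, and it splits naturally into the \emph{traditional case} and the \emph{extended case}.

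For the traditional case I would either cite the corresponding result already established in \cite{dMP-moulton} or reprove it by the following standard argument. Fix an optimal character $f$ on $X$ for $(T_1,T_2)$; WLOG $l_f(T_1) < l_f(T_2)$. Root the common pendant subtree on $X'$ at its attachment point $v_i$ and run Fitch's algorithm bottom-up on it; let $S \subseteq \stateset$ be the Fitch set at the root of the common subtree. The key observation is that this Fitch computation is \emph{identical} in $T_1$ and $T_2$ because the rooted subtrees $(T_1|_{X'})^\rt$ and $(T_2|_{X'})^\rt$ coincide, so both trees pay the same number $\ell$ of mutations strictly inside $X'$. Now define a character $f'$ on $Y$ by keeping $f$ on $X\setminus X'$ and assigning the single new taxon $x$ a state chosen from $S$ that is ``most helpful'' — concretely, via Observation~\ref{obs:fitch2general} I would argue that for the right choice of root-state $a \in S$ one gets $l_{f'}(T'_i) = l_{f_i}(T_i) - \ell$ for $i=1,2$ simultaneously, or at worst $l_{f'}(T'_1) \le l_f(T_1) - \ell$ and $l_{f'}(T'_2) \ge l_f(T_2) - \ell$; this yields $\dMP(T'_1,T'_2) \ge l_{f'}(T'_2) - l_{f'}(T'_1) \ge l_f(T_2) - l_f(T_1) = \dMP(T_1,T_2)$.

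The extended case is where the real difficulty lies, and it is the step I expect to be the main obstacle. Here $(T_1|_{X'})^\rt \neq (T_2|_{X'})^\rt$, so the two subtrees no longer have matching Fitch behaviour, and we must replace the pendant subtree on $X'$ by a \emph{three-leaf} gadget $\{x,y,z\}$ whose topology differs between $T'_1$ and $T'_2$ (a ``cherry $\{x,y\}$ vs leaf $z$'' split in $T'_1$ against a ``cherry $\{x,z\}$ vs leaf $y$'' split in $T'_2$). Given an optimal $f$ with $l_f(T_1)<l_f(T_2)$, I would again run Fitch on the (now distinct) rooted subtrees on $X'$ in $T_1$ and $T_2$, obtaining root Fitch-sets $S_1$ and $S_2$ and internal mutation counts $\ell_1, \ell_2$. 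The goal is to build $f'$ on $(X\setminus X')\cup\{x,y,z\}$ by keeping $f$ on $X\setminus X'$ and choosing states for $x,y,z$ so that the three-leaf gadget in $T'_1$ reproduces the same ``interface state at the attachment edge'' as $S_1$ while paying mutations matching $\ell_1$ (up to the unavoidable additive constant from a $3$-leaf tree), and likewise for $T'_2$ with $S_2, \ell_2$. The crux is a case analysis — paralleling Cases~1–3 of Theorem~\ref{thm:reduc} — on the relationship between $S_1$ and $S_2$ (whether $S_1 \subseteq S_2$, $S_1 \setminus S_2 \neq \emptyset$, etc.), invoking a \emph{less constrained roots}–style argument to rule out the configurations that would contradict $l_f(T_1) < l_f(T_2)$, and in the remaining configurations exhibiting the explicit assignment to $x,y,z$ (e.g. picking $a \in S_1 \setminus S_2$ and assigning $x \gets a$, $y \gets a$, $z \gets s$ for a fresh state $s$) that certifies $\dMP(T'_1,T'_2) \ge \dMP(T_1,T_2)$. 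Verifying that the chosen assignment really does produce the claimed mutation count in \emph{both} gadgets simultaneously — and that the differing topology of the two gadgets is exactly what lets us exploit $a \notin S_2$ — is the delicate, figure-driven bookkeeping that dominates this proof, and the reason the three distinguished taxa $x,y,z$ (rather than one or two) are needed.
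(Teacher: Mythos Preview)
Your high-level framework is right and matches the paper: the easy inequality via Corollary~3.5 of \cite{dMP-moulton}, the traditional case by citation, and the extended case by constructing a suitable character $f'$ on $(X\setminus X')\cup\{x,y,z\}$ from an optimal $f$. But your execution plan for the extended case is aimed at the wrong Fitch sets and is considerably more complicated than what is actually needed.

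The key simplification you are missing is that $\ell_1=\ell_2$: the two rooted subtrees $(T_1|_{X'})^\rt$ and $(T_2|_{X'})^\rt$ are different rootings of the \emph{same} unrooted tree, and parsimony score is rooting-independent, so $l_{f_B}(T^{B})=l_{f_D}(T^{D})$. These contributions therefore cancel entirely, and one obtains
\[
\dMP(T_1,T_2)=l_{f_C}(T^{C})-l_{f_A}(T^{A})+p,\qquad p:=m_2-m_1\in\{0,1\},
\]
where $m_i\in\{0,1\}$ records whether the single root node (the subdivision point on the edge joining the pendant subtree to the rest of $T_i$) is a union node. The case analysis is then just on $p$, not on the relationship between your $S_1$ and $S_2$. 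For $p=0$ the character $f'=f[s,s,s]$ with $s$ fresh gives a $(1,1)$ situation; for $p=1$ the character $f'=f[s,s,a]$ with $a\in\fitch_A$ (the Fitch set on the $X\setminus X'$ side of $T_1$) gives a $(1,2)$ situation, exploiting precisely the asymmetry of the two $3$-leaf gadgets. Only $\fitch_A$ is used; $S_1$ and $S_2$ never enter.

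Your proposed case split on $S_1$ versus $S_2$, together with a ``less constrained roots'' argument in that direction, does not have the right shape: what controls $p$ is whether $\fitch_A\cap S_1$ and $\fitch_C\cap S_2$ are empty, not how $S_1$ and $S_2$ compare to each other. In particular there is no reason to expect $S_1\setminus S_2\neq\emptyset$ when $p=1$, so the specific assignment $x\gets a\in S_1\setminus S_2$, $y\gets a$, $z\gets s$ that you sketch need not be available; and even when it is, having $a\in S_1$ tells you nothing about $\fitch_A$ or $\fitch_C$, so you cannot read off the mutation count at the root of $T'_1$ or $T'_2$. The correct analogy with the chain-reduction proof is that $\fitch_A,\fitch_C$ here play the role of $\fitch_A,\fitch_B,\fitch_C,\fitch_D$ there (the Fitch sets \emph{outside} the reducible structure), not $S_1,S_2$.
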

\begin{proof}
If the traditional case applies, then the result is immediate from \cite{dMP-moulton}. Hence,
let us assume that we are in the extended case. As in the proof of Theorem
\ref{thm:reduc}, $\dMP(T'_1, T'_2) \leq \dMP(T_1, T_2)$ follows from Corollary 3.5 of \cite{dMP-moulton}. It remains to show $\dMP(T'_1, T'_2) \geq \dMP(T_1, T_2)$.  To this end, we may further assume $\dMP(T_1, T_2)>0$ as otherwise the theorem clearly holds.

\begin{figure}[h]
 \begin{center}
  \includegraphics[width=10cm]{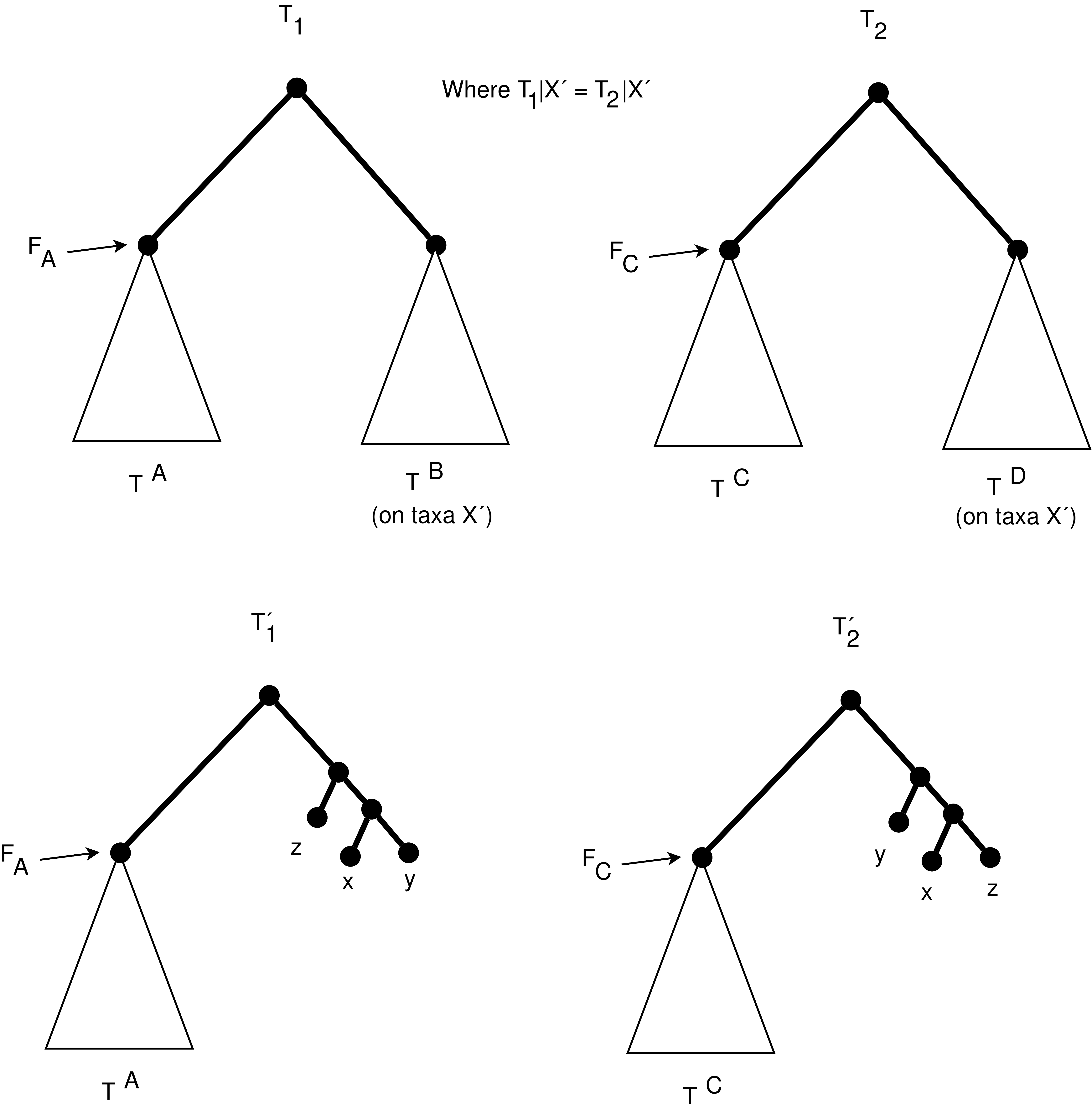}
 \end{center}
 \caption{The generalized subtree reduction as it behaves in its extended case. That is,
when $|X'| \geq 4$, $T_1|_{X'} = T_2|_{X'}$ but $(T_1|_{X'})^\rt \neq (T_2|_{X'})^\rt$.}
\label{fig:subtree}
\end{figure}

Let $f$ be an optimal character (in the usual sense) for $T_1$ and $T_2$
i.e. $|l_f(T_1) - l_f(T_2)| = \dMP(T_1, T_2)$. 
Let $T^{A}, T^{B}, T^{C}, T^{D}$ refer to the 4 subtrees of $T_1, T_2$ shown in Figure \ref{fig:subtree}. For $P \in \{A,B,C,D\}$, let $X_{P}$ refer to the taxa in subtree $T^{P}$. Here $X_B = X_D = X'$, $X_A = X_C = X \setminus X'$ and
$T_1 | X' = T_2 | X'$. That is, $T^{B}$ and $T^{D}$ are identical subtrees assuming we ignore the
point at which each subtree is connected to the rest of its tree. As indicated in the figure, we
root $T_1$ and $T_2$ (to put them in an appropriate form for  Fitch-extensions) by subdividing
the edge that connects each pendant subtree to the rest of the tree. Let $f_{P}$ denote the character obtained
by restricting $f$ to $X_{P}$, and let $\fitch_{A}$ refer to the set of states assigned to the root of $T^{A}$ by the Fitch map induced by $f_{A}$.

For $i\in\{1,2\}$, let $m_i=0$ if the root of $T_i$ is an intersection node, and $m_i=1$ otherwise (i.e. the root is a union node). Then we have 
\begin{align*}
l_f(T_1) & = l_{f_A}(T^{A}) + l_{f_B}(T^{B}) + m_1,~~~ \mbox{and} \\
l_f(T_2) & = l_{f_C}(T^{C}) + l_{f_D}(T^{D}) + m_2.
\end{align*}
Note that we also have $l_{f_B}(T^{B}) = l_{f_D}(T^{D})$ because $T^{B}$ and $T^{D}$ are (from an unrooted perspective) identical.

In the remainder of the proof we shall assume that $l_f(T_1) < l_f(T_2)$, as the other case $l_f(T_1) > l_f(T_2)$ is symmetrical. Let $p=m_2-m_1$. Then we have
\begin{align}
\dMP(T_1, T_2) & = l_f(T_2) - l_f(T_1)\nonumber\\
&= (l_{f_C}(T^{C}) + l_{f_D}(T^{D}) + m_2) - (l_{f_A}(T^{A}) + l_{f_B}(T^{B}) + m_1) \nonumber \\
&= l_{f_C}(T^{C}) -  l_{f_A}(T^{A}) + (m_2 - m_1) \nonumber \\
&=  l_{f_C}(T^{C}) -  l_{f_A}(T^{A}) + p. \label{eq:tree:p}
\end{align}
Now we claim $p \in \{0, 1\}$. To see this, by definition of $p$ it suffices to show that $p \geq 0$: Indeed, fix a state  $s$ \steven{that} is not used elsewhere and consider the character $f^{*}$ obtained from modifying $f$ by assigning all the taxa in $X'$ to the state $s$; then we have $l_{f^{*}}(T_2) = l_{f_C}(T^{C}) + 1$ and $l_{f^{*}}(T_1) = l_{f_A}(T^{A}) + 1$, from which we can conclude that $\dMP(T_1, T_2) \geq l_{f^{*}}(T_2) - l_{f^{*}}(T_1) =  l_{f_C}(T^{C}) -  l_{f_A}(T^{A})$, and hence $p\geq 0$.

In order to show $\dMP(T'_1, T'_2) \geq \dMP(T_1, T_2)$, by Equation~(\ref{eq:tree:p}) it suffices to show that 
\begin{equation}
\label{eq:subtree:p}
\dMP(T'_1, T'_2) \geq   l_{f_C}(T^{C}) -  l_{f_A}(T^{A}) + p.
\end{equation}
To shorten notation we will write $f[a,b,c]$ to denote the character on $(X \setminus X') \cup \{x,y,z\}$ obtained
from $f$ by leaving the states assigned to taxa in $X_A = X_C = (X \setminus X')$ intact and assigning states
$a,b,c$ to $x, y, z$ respectively. Since $p \in \{0, 1\}$,  we have the following two cases:\\
\\
\emph{Case 1: $p=0$.} Let $f' = f[s,s,s]$ where $s$ is a state that does not appear elsewhere.
Then $l_{f'}(T'_1) = l_{f_A}(T^{A}) + 1$ and $l_{f'}(T'_2) = l_{f_C}(T^{C}) + 1$. This implies 
$$
\dMP(T'_1, T'_2)\geq l_{f'}(T'_2)-l_{f'}(T'_1) = l_{f_C}(T^{C})-l_{f_A}(T^{A}), 
$$  from which Equation~(\ref{eq:subtree:p}) follows and we are done.\\
\\
\emph{Case 2: $p=1$.} Let $a \in \fitch_A$ and let $s$ be a state that does not appear
elsewhere. Consider $f' = f[s,s,a]$. Observe that $l_{f'}(T'_1) = l_{f_A}(T^{A}) + 1$
and $l_{f'}(T'_2) = l_{f_C}(T^{C}) + 2$, so we are done by an argument similar to that in \emph{Case 1}. 
\end{proof}

Note that the generalized subtree reduction could be used to replace the ``pendant in both trees'' case of the chain reduction. If the chains are oriented the same way they will be reduced to a single taxon (using the traditional case of the subtree reduction) and if they are oriented in opposite direction they will be reduced to a subtree of size 3 (using the extended case of the subtree reduction). We have described the chain reduction and the generalized subtree reduction
separately to emphasize that in terms of correctness the two reductions are independent of each other.

\section{Parameterized algorithms}

As stated in the introduction, combining Theorems \ref{thm:gensubtree} and 
\ref{thm:reduc} with the kernelization in \cite{allen01} and the exponential-time algorithm
for $\dMP$ described in \cite{kelkFibonacci}, yields the following theorem:\\
\\
\textbf{Theorem \ref{thm:fpttbr}}.
\emph{Let $T_1$ and $T_2$ be two unrooted binary 
trees on the same set of species $X$. Computation
of $\dMP(T_1,T_2)$ is fixed parameter tractable 
in parameter $\dTBR = \dTBR(T_1, T_2)$.  More 
specifically, $\dMP(T_1,T_2)$ can be computed 
in time $O( \phi^{c \cdot \dTBR} \cdot \emph{poly}(|X|) )$ where $\phi \approx 1.618...$
is the golden ratio and $c \leq 112/3$.}\\
\\
We close the main part of the paper by observing that a purely theoretical version of
Theorem \ref{thm:fpttbr} can be obtained via Courcelle's Theorem~\cite{courcelle1990,Arnborg91}. \steven{A few further definitions are first necessary}. \steven{Given an undirected graph $G = (V,E)$, a \emph{bag} is simply a subset of $V$. A \emph{tree decomposition} of $G$ consists of a tree $T_{G} = (V(T_G), E(T_G))$ where $V(T_G)$
is a collection of bags such that the following holds: (1) every node of
$V$ is in at least one bag; (2) for each edge $\{u,v\} \in E$, there exists some bag that
contains both $u$ and $v$; (3) for each node $u \in V$, the bags that contain $u$ induce
a connected subtree of $T_G$. The \emph{width} of a tree decomposition is equal to the
cardinality of its largest bag, minus 1. The \emph{treewidth} of a graph $G$, denoted $tw(G)$, is equal
to the minimum width, ranging over all possible tree decompositions of $G$ \cite{bodlaender1994tourist, BodlaenderK10}. A tree with at least one edge has treewidth 1.} \steven{The \emph{display graph} of two unrooted binary phylogenetic trees $T_1$ and $T_2$, both on the same set of taxa $X$, is the graph $D(T_1, T_2)$ obtained by identifying leaves that are labelled with the same taxon \cite{bryant2006compatibility}. See Figure \ref{fig:displaygraph} for an example.} \steven{A formal description of \emph{Monadic Second Order Logic (MSOL)} is beyond the scope of this article; we refer to
\cite{kelk2015} for an introduction relevant to phylogenetics. Informally, it is a type of logic used to describe properties of graphs, in which both universal (``for all'') and existential (``there exists'') quantification are permitted over (subsets of) nodes and (subsets of) edges.}

\begin{figure}[h]
 \begin{center}
  \includegraphics[width=6cm]{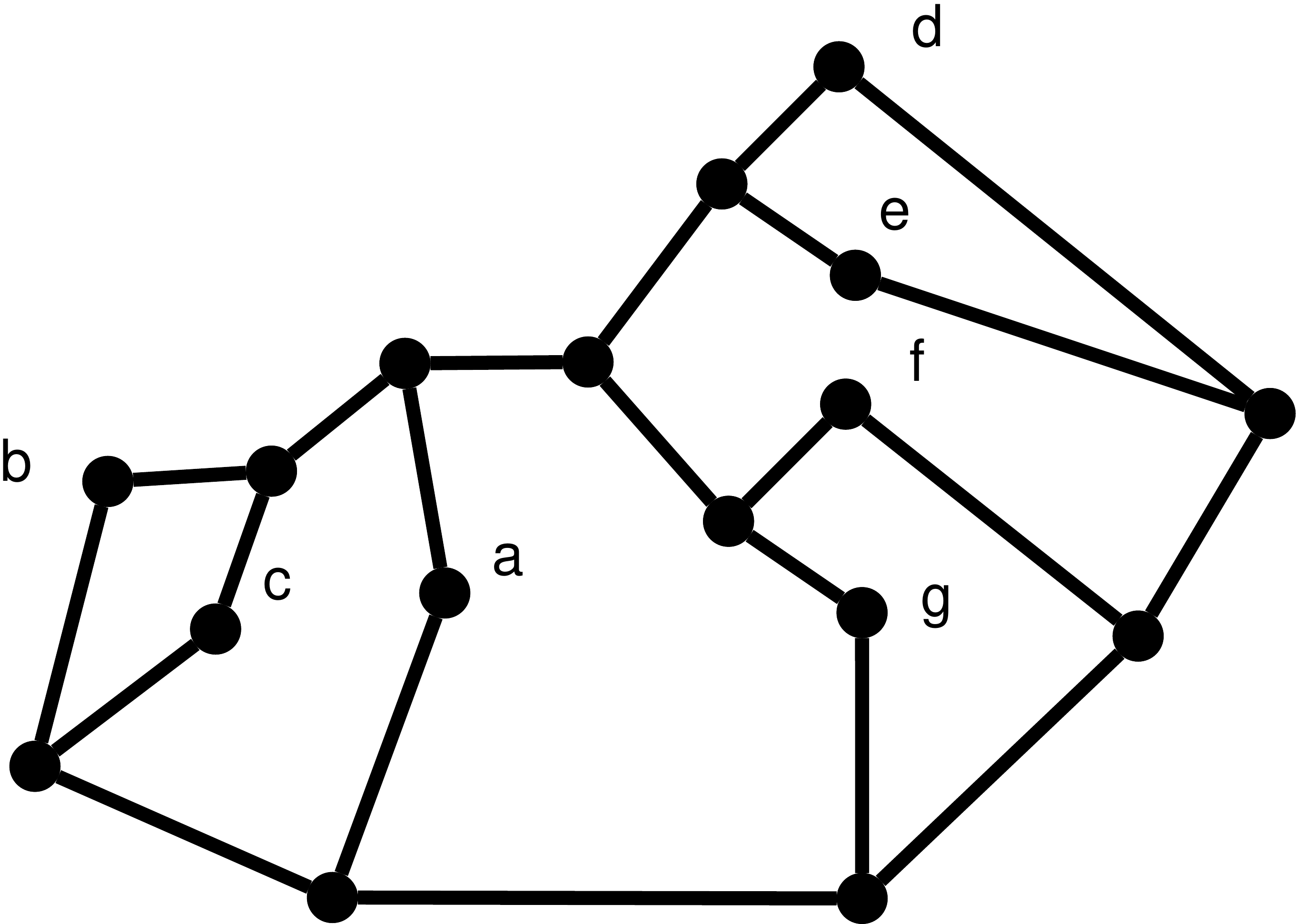}
 \end{center}
 \caption{\kelk{The display graph $D(T_1, T_2)$ obtained from the trees $T_1$ and $T_2$ shown
in Figure \ref{fig:subtreeExample}. The treewidth of this graph is 3, and
$\dMP(T_1, T_2) = \dTBR(T_1,T_2)=1$.}} 
\label{fig:displaygraph}
\end{figure}

\begin{remark}
Let $T_1$ and $T_2$ be two unrooted binary 
trees on the same set of species $X$. Via \steven{Monadic Second Order Logic}  \steven{(MSOL)} it can be shown
that computation
of $\dMP(T_1,T_2)$ is possible in time $g( \dTBR ) \cdot \emph{poly}(|X|)$ where
 $\dTBR = \dTBR(T_1, T_2)$ and $g$ is some computable function that depends only
on $\dTBR$.
\end{remark}
We do not give \steven{explicit} details \kelk{of this alternative FPT proof} since the argument is extremely indirect and does not in any
sense lead to a practical algorithm: the function $g$ is astronomical. However, for completeness we sketch the overall idea. In \cite{kelk2015} it is shown that computation of
$d^{2}_\mathrm{MP}$ (the variant of {\dMP} in which characters are restricted to at most 2 states)
is FPT in parameter \dTBR. The core insight there is (i) \steven{the display graph $D(T_1,T_2)$} has treewidth bounded by a function of $\dTBR$ and (ii) Fitch's algorithm can be modelled in a static fashion by guessing an optimal character and subsequently guessing the Fitch maps
induced by that character in the two trees (including whether each node is a union or intersection node). This naturally requires that the internal nodes of the trees are partitioned into $O( 2^{|\stateset|} )$ subsets, where as usual $\stateset$ is the set of states used by the optimal character. From \cite{boes2015} it is known that there always exists an optimal character in which $|\stateset| \leq 7 \dMP - 5$. Now, there is a polynomial-time 3-approximation for computation of $\dTBR$ (see \cite{chen2015parameterized} for a recent overview), so running such an algorithm yields a value $t$ such that $\dTBR \leq t \leq 3 \dTBR$. Combining with the fact that $\dMP \leq \dTBR$  \cite{dMP-fischer,dMP-moulton}, it follows that $7t-5$ is an 
upper bound on the number of states required to encode an optimal character for \dMP. Also, 
$7t-5$ is clearly bounded by a function of $\dTBR$, which means that the resulting
sentence of \steven{MSOL} has a length that is bounded by a (admittedly highly exponential) function of $\dTBR$. The result then follows from the \steven{optimization} variant of Courcelle's Theorem \steven{known as EMS} \steven{which is} described by Arnborg et al. in \cite{Arnborg91}.

\section{Discussion and open problems}
\label{sec:disc}

A major open question is whether the two reductions 
discussed in this article (the chain
reduction and the generalized subtree reduction) are together sufficient
to obtain a kernel for $\dMP$.  That is, after applying
the rules repeatedly until they can no longer be 
applied, is it true that the
number of taxa in the resulting instance is 
bounded by some function of $\dMP$? If answered affirmatively,  
this would prove that computation of $\dMP$ 
is FPT in its most natural 
parameterization, 
namely $\dMP$ itself, which would mean that $\dMP$ can be computed in time
$f(\dMP) \cdot \text{poly}(|X|)$ for some computable function $f$ that depends only on
$\dMP$.

Note that,  if it can be 
shown that $\dTBR \leq g( \dMP )$ for some function $g$ 
that depends only on $\dMP$, then the desired FPT result 
will follow automatically from Theorem~\ref{thm:fpttbr}. 
In \cite{dMP-moulton} it is claimed that $\dTBR \leq 2 \dMP$,
and while the claim itself is not known to be false, the proof is incorrect. In fact, at the present time we do not know how to prove $\dTBR \leq g( \dMP )$ for \emph{any} $g(.)$, even when $g$ is extremely fast-growing. Relatedly, we do not even know how to compute $\dMP$ in time $O( |X|^{f( \dMP )})$ for any computable function $f$ that only depends on $\dMP$. Running times of this latter form (which are algorithmically weaker results than FPT) are trivial for $\dTBR$ and most other
tree distances.

This is intriguing because, although tree-pairs are known where $\dTBR = 2 \dMP$ (see e.g. Figure \ref{fig:bestpossible}),
empirical tests suggest that $\dMP$ and $\dTBR$
are in practice often \emph{extremely} close. The following simple experiment highlights this.
For each $n \in \{10,15,20,25\}$ and $t \in \{0.2n, 0.4n, 0.6n, 0.8n\}$ we generated 500 tree pairs, where the first tree is generated uniformly at random from the space of unrooted binary trees on $n$ taxa, and the second tree is obtained from the first by randomly applying at most $t$ TBR moves. We computed $\dMP$ using the algorithm described in \cite{kelkFibonacci} and $\dTBR$ using an ad-hoc Integer Linear Programming (ILP) formulation. The ILP formulation is the running time bottleneck, limiting us to 25 taxa. For every $(n, t)$
parameter combination, at most 1 tree-pair was observed that had $\dMP = \dTBR - 2$ (and this was the largest difference we observed). In Table \ref{tab:exp},
the first number is the $\%$ of the 500 tree pairs that had $\dMP = \dTBR$,
and the second number is the $\%$ of the tree pairs where $\dMP \geq \dTBR - 1$.

\begin{table}
\begin{center}
\begin{tabular}{|c||c|c|c|c|}
\hline
$n,t$ & $0.2n$ & $0.4n$ & $0.6n$ & $0.8n$ \\
\hline
\hline
\textbf{10} & 99.8, 100 & 96.2, 100 & 91.6, 100 & 89, 100\\
\hline
\textbf{15} & 99.2, 100 & 96.4, 99.8 & 94, 100 & 87, 100\\
\hline
\textbf{20} & 99.8, 100 & 97.6, 100 & 90.2, 99.8 & 87.4, 100\\
\hline
\textbf{25} & 99.8, 100 & 96.2, 100 & 91, 99.8 & 77.9, 100\\
\hline
\end{tabular} 
\caption{Percentage of the 500 tree-pairs on $n$ taxa (and at most $t$ TBR moves apart) in which
$\dMP = \dTBR$, and $\dMP \geq \dTBR -1$, respectively.}
\label{tab:exp}
\end{center}
\end{table}

Despite these empirical observations there are some clues that $\dMP$ and $\dTBR$ might ultimately have a rather different combinatorial structure. Consider the following construction.
In \cite{dMP-kelk} it is shown, for every integer $k \geq 2$, how to construct a (rooted) tree-pair $T_1, T_2$
such that $\dMP(T_1, T_2) = \dTBR(T_1, T_2) = 4k$ and,
\[
(\max_{f}  l_f(T_2) - l_f(T_1)) \geq (\max_{f}  l_f(T_1) - l_f(T_2)) + (k+1).
\]
(As usual $f$ in this context ranges over all characters). Such tree-pairs
are considered ``asymmetric''. Fix an arbitrary constant $k \geq 2$ and let $T_1, T_2$ be such a tree-pair, where $X$ denotes their set of taxa. Produce exact copies of $T_1, T_2$ on a new set
of taxa $X'$, and call these trees $T'_1, T'_2$. Connect $T_1$ and $T'_2$ together at their roots
by an edge - call this new tree $T_1:T'_2$ - and do the same for $T'_1$ and $T_2$ to obtain
the new tree $T'_1:T_2$. Both $T_1:T'_2$ and $T'_1:T_2$ are on taxa set $X \cup X'$ and
both have a common split $X|X'$.

It is straightforward to show that, due to the fact that $T_1:T'_2$ and $T'_1:T_2$ have been
constructed by joining asymmetric trees together in ``antiphase'', the following holds:
\begin{align*}
 \dMP(T_1:T'_2, T'_1:T_2) &\leq 4k + (4k - (k+1)) + 1\\
&= 7k.
\end{align*}
On the other hand, it is not too difficult to show (using agreement forests) that
\[
\dTBR(T_1:T'_2, T'_1:T_2) = 8k.
\]
Given that $k$ can be chosen arbitrarily, the difference between $7k$ and $8k$ can be
made arbitrarily large. This emphasizes that $\dMP$ and $\dTBR$ behave rather differently with regard to common splits. It also shows that if a tree-pair $T_1, T_2$ has a common split $P|Q$,  $\dMP(T_1, T_2)$ can (at least in an additive sense) be arbitrarily smaller than
$\dMP(T_1|P, T_2|P) + \dMP(T_1|Q, T_2|Q)$.

\begin{figure}[h]
 \begin{center}
  \includegraphics[width=10cm]{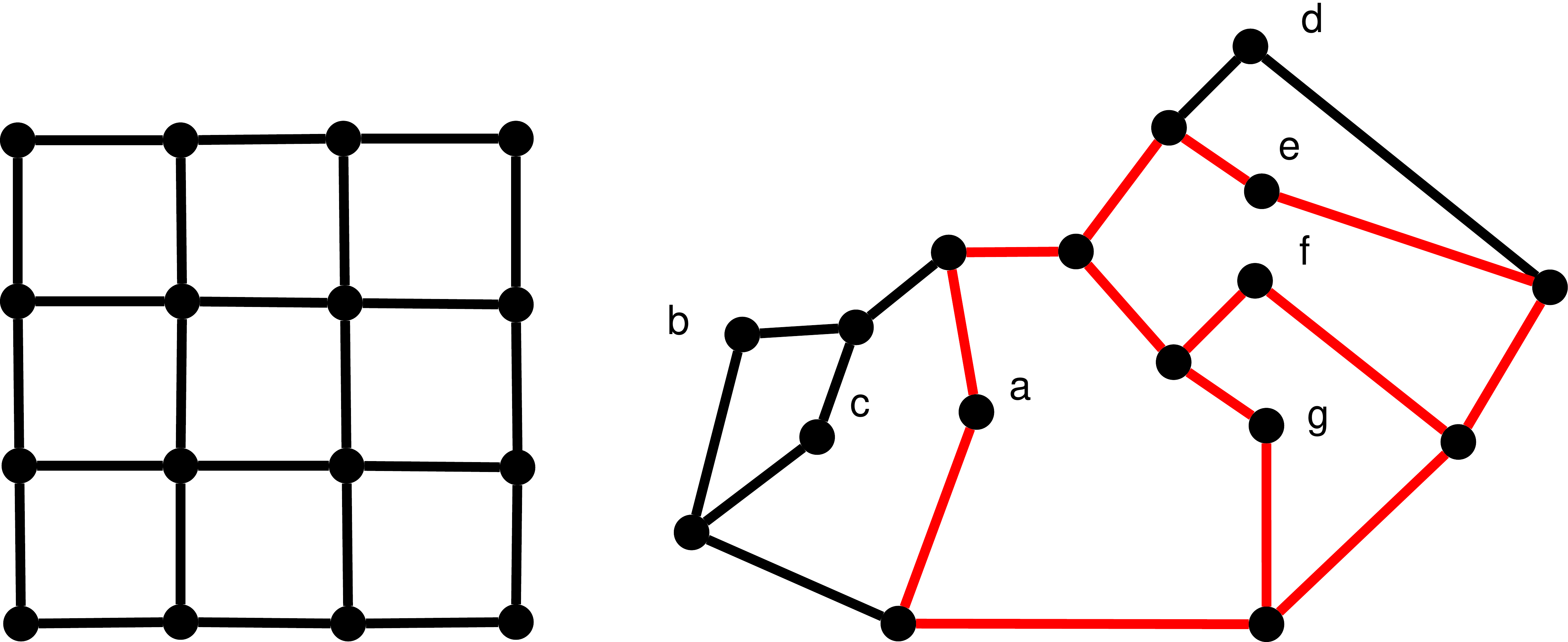}
 \end{center}
 \caption{\kelk{Left: a $4 \times 4$ grid. Right: The red edges show the embedding of a $K_4$ minor (which is itself a minor of the $3 \times 3$ grid) within the display graph
$D(T_1,T_2)$ from Figure \ref{fig:displaygraph}. The presence of this $K_4$ minor
establishes that the treewidth of the display graph is at least 3, and thus $\dMP(T_1, T_2)>0$.}} 
\label{fig:k4minor}
\end{figure}

\steven{Computation of $\dMP$ also touches on a number of structural issues relevant to algorithmic graph theory. In the MSOL approach described in the previous section both the length of the logical sentence, and the treewidth of the display graph, are bounded by a function of $\dTBR$. It is natural to ask whether bounds in terms of $\dMP$, rather than $\dTBR$, could be obtained because this would prove that $\dMP$ 
is FPT in its most natural 
parameterization (independently of the exact relationship between $\dMP$ and $\dTBR$). To bound the length of the sentence by a function of $\dMP$ it will be necessary
to identify a polynomial-time computable upper bound on $|\stateset|$ (the number of states used by some optimal character) that is bounded by a function of $\dMP$. This is a challenging question, albeit one that is tied closely to the very specific combinatorial structure of $\dMP$.} 

\steven{Establishing an $f( \dMP )$ bound on the treewidth of the display graph (for some function $f$) is, however, fundamental, in the following sense. An undirected graph $H$ is a \emph{minor} of an undirected graph $G$ if $H$ can be obtained from $G$ by deleting nodes, deleting edges and
contracting edges \cite{diestel2010}. The $n \times n$ \emph{grid} graph is (as its name suggests) simply the graph on $n^2$ nodes corresponding to the $n \times n$ square grid (see Figure \ref{fig:k4minor} for an example). From the \emph{grid minor theorem} it is well-known that if a graph has
treewidth $\geq k$, it has a grid minor of size at least $h(k) \times h(k)$ for a function $h$
that grows at least polylogarithmically quickly as a function of $k$ \cite{robertson1986graph, robertson1994quickly} (for more recent, stronger bounds on $h$ see \cite{chuzhoy2015excluded}). Hence, to prove that
the treewidth of the display graph is bounded by some function of $\dMP$ it is sufficient to
prove that, as grid minors in the display graph become larger and larger, $\dMP$ must also grow. The example of the $3 \times 3$ grid minor is illustrative (see Figure \ref{fig:k4minor}). If the display graph contains a $3 \times 3$ grid minor, it must also contain a $K_4$ minor (the complete undirected graph on 4 nodes), since $K_4$ is a minor of the $3 \times 3$ grid. Two compatible (i.e. $\dMP = 0$) phylogenetic trees induce display graphs of treewidth (at most) 2 \cite{bryant2006compatibility,grigoriev2015}, and graphs of treewidth at most 2 are characterized by the absence of $K_4$ minors. Hence, the presence of a $3 \times 3$ grid minor in the display graph implies $\dMP > 0$.}

\steven{Intuitively it seems plausible that larger grid minors will induce ever larger incongruencies between the two trees, thus driving $\dMP$ further up.
However, as demonstrated in \cite{grigoriev2015} formalizing such an intuition is a formidable task, since the embeddings of the minors can ``weave'' between the two trees in a difficult to analyse fashion. Indeed, this raises the question whether, and under which circumstances, the presence of (grid) minors in the display graph $D(T_1, T_2)$ can be translated into phylogenetic-topological statements about $T_1$ and $T_2$. This intersects with an emerging literature
at the interface of algorithmic graph theory and phylogenetics (see e.g. \cite{ grigoriev2015,kelk2015,Vakati2015337} and references therein).}

\section{Acknowledgements}

We thank Olivier Boes for helpful discussions. \kelk{We also thank the editor and the two anonymous referees for their constructive suggestions. SK and TW acknowledge the support of London Mathematical Society grant SC7-1516-05.}
\bibliographystyle{plain}
\bibliography{chain_2016_09_06}

\end{document}